\newtheorem{theorem}{Theorem}[section]
\newcommand{\argmin}{\operatornamewithlimits{argmin}}
\newcommand{\argmax}{\operatornamewithlimits{argmax}}
\begin{document}

\begin{frontmatter}
\title{CeMon: A Cost-effective Flow Monitoring System in Software Defined Networks}

\author[hkust]{Zhiyang Su}
\ead{zsuab@cse.ust.hk}
\author[hkust]{Ting Wang}
\ead{twangah@cse.ust.hk}
\author[hkust]{Yu Xia}
\ead{rainsia@cse.ust.hk}
\author[hkust]{Mounir Hamdi}
\ead{hamdi@cse.ust.hk}

\address[hkust]{The Hong Kong University of Science and Technology, Clear Water Bay, Kowloon, Hong Kong}

\begin{abstract}
  Network monitoring and measurement are crucial in network management to facilitate quality of service routing and performance evaluation. Software Defined Networking (SDN) makes network management easier by separating the control plane and data plane. Network monitoring in SDN is relatively light-weight since operators only need to install a monitoring module into the controller.
Active monitoring techniques usually introduce extra overhead into the network. The state-of-the-art approaches utilize sampling, aggregation and passive measurement techniques to reduce measurement overhead. However, little work has focused on reducing the communication cost of network monitoring. Moreover, most of the existing approaches select polling switch nodes by sub-optimal local heuristics.

Inspired by the visibility and central control of SDN, we propose CeMon, a generic low-cost high-accuracy monitoring system that supports various network management tasks. We first propose a Maximum Coverage Polling Scheme (MCPS) to optimize the polling cost for all active flows. The problem is formulated as a weighted set cover problem which is proved to be NP-hard. Heuristics are presented to obtain the polling scheme efficiently and handle traffic dynamics practically. In order to balance the cost and flexibility, an Adaptive Fine-grained Polling Scheme (AFPS) is proposed as a complementary method to implement flow level measurement tasks. Three sampling algorithms are further proposed to eliminate measurement overhead while maintain high accuracy. 
Both emulation and simulation results show that MCPS reduces more than $50\%$ of the communication cost with negligible loss of accuracy for different topologies and traffics. We also use real traces to demonstrate that AFPS reduces the cost by up to $20\%$ with only $1.5\%$ loss in accuracy.
\end{abstract}

\begin{keyword}
  Software defined networking \sep
  Network management \sep
  Network measurement
\end{keyword}

\end{frontmatter}


\section{Introduction}
Network monitoring is a common task in network management. Flow-based measurements plays an important role in network management. Low-cost, timely and accurate flow statistics collection is crucial for different management tasks such as traffic engineering, accounting and intelligent routing. For example, many data centers collects flow statistics in the orders of second to dynamically schedule flow routing~\cite{hedera, helios}.

Traditional network monitoring techniques such as NetFlow~\cite{netflow} and sFlow~\cite{sflow} support flow-based measurement tasks. However, they have a higher deployment cost and consume much resource~\cite{opensketch}. For example, the deployment of NetFlow consists of setting up collectors, analyzers and other services. Moreover, enabling NetFlow in the routers may degrade the packet forwarding performance~\cite{reformulateplacement}. Besides, these passive measurement techniques requires full access to the network devices which raises privacy and security issues. 

By separating the control plane and the data plane, SDN provides unprecedented flexibility to conduct network measurement. The fundamental primitive for existing software defined measurement frameworks is flow statistics collection~\cite{opensketch, progme, dcm, onlineaggregate}. If a flow has corresponding forwarding rules in a switch, it is regarded as an active flow. The controller is able to track all the active flows by passively receiving flow arrive and flow expired notifications from the switches. Monitoring flow statistics in SDN is relatively light-weight and easy to implement: the central controller maintains the whole network states, and is able to poll flow statistics from any switch periodically.

Recent pull-based measurement proposals such as OpenTM~\cite{opentm} obtain flow statistics based on a per-flow querying strategy. If the number of active flows is large, the extra communication cost for each flow cannot be neglected. Due to the limited bandwidth between the controller and the switches, the monitoring traffic is likely to result in a bandwidth bottleneck~\cite{mahout}. The situation becomes worse for in-band SDN deployment when monitoring and routing traffic shares bandwidth along the same link. In contrast, FlowSense~\cite{flowsense} infers link utilization by passively capturing the flow arrival and expiration messages with zero overhead. However, FlowSense calculates the link utilization only at discrete points in time after the flow expires. This limitation cannot meet the real-time requirement, neither can the accuracy of the results be guaranteed. Therefore, the key challenge for is how to design a high-accuracy flow statistics collection scheme at minimum polling overhead. However, eliminating the bandwidth consumption for measurement traffic has not been studied so far.

Inspired by the global view of SDN and existing software defined measurement frameworks~\cite{opensketch, opentm, flowsense}, we propose a novel flow statistics collection system CeMon, a low-cost high-accuracy system that collects the flow statistics across the network in a timely fashion. The design of CeMon is based on the observation that per-flow querying strategy is sub-optimal as it lacks globally optimization to choose the polling switches. By aggregating the flow statistics collection queries and optimizing the polling frequency, CeMon significantly reduces the flow statistics collection cost. Such optimization is of great importance for network monitoring, especially in a high-accuracy monitoring scenario that requires real-time statistics collection~\cite{hedera, helios}. 

CeMon is generic, efficient and accurate. First, CeMon is able to cooperate with other software defined measurement frameworks. This property is guaranteed by the implementation of the flow statistics collection primitive. Working between the physical network and the measurement applications, other frameworks are able to invoke CeMon to collect flow statistics at minimum monitoring overhead. Second, thanks to the proposed heuristic, CeMon is able to efficiently generate the polling scheme within two seconds for a huge number of active flows. Finally, we prove the performance and the accuracy bound of our heuristic. Extensive experimental results demonstrate that CeMon reduces up to $50\%$ monitoring overhead with negligible loss in accuracy.

The primary contributions of this paper are as follows.

\begin{itemize}
\item
  We provide a general framework (Section~\ref{sec_design}) to facilitate various monitoring tasks such as link utilization, traffic matrix estimation, anomaly detection and so on. It is a shim layer between the physical network and measurement applications, which is compatible with most of current software defined measurement frameworks and significantly reduces the cost to fetching flow statistics.
\item
  We propose a Maximum Coverage Polling Scheme (MCPS) (Section~\ref{sec_mcps}) which globally optimizes the polling cost. Furthermore, MCPS is generic and  can be applied to out-of-band deployment, in-band deployment and multiple controllers.
\item
  We propose an Adaptive Fine-grained Polling Scheme (AFPS) (Section~\ref{sec_afps}) which supports flow level measurements at low-cost. AFPS leverages different adaptive algorithms to dynamically adjust polling frequency, which eliminates the measurement overhead with negligible loss of accuracy.
\end{itemize}

The rest of this paper is structured as follows. Section~\ref{sec_design} introduces the background and presents the architecture of CeMon. Section~\ref{sec_mcps} formulates the maximum coverage polling problem, proposes heuristics to generate solution efficiently and to handle flow dynamics. Section~\ref{sec_afps} presents a fine-grained flow level measurement framework and proposes adaptive polling algorithms to support various measurement tasks. Section~\ref{sec_evaluation} elaborates on the performance of MCPS and AFPS by real packet traces. Finally, Section~\ref{sec_relatedwork} summarizes related work and Section~\ref{sec_conclusion} concludes the paper.

\section{System Design} \label{sec_design}
In this section, we first introduce the backgrounds of SDN and OpenFlow. The architecture of CeMon and its workflow are presented thereafter.

\subsection{Background}
OpenFlow~\cite{openflow} is an open standard of SDN. Currently, OpenFlow-based SDN is widely used in both industry and academia. OpenFlow is the de facto standard communication interface between the control plane and the data plane. The controller is able to add, remove and modify rules in the switches to operate routing and monitoring actions. When the first packet of a new flow arrives at the edge switch, a table miss is raised and the packet header will be forwarded to the controller. The controller processes the packet header and takes further actions such as setting up the routing path. According to the OpenFlow specification 1.0~\cite{openflowspec10}, the minimum lengths of flow statistics request and reply messages on wire are 122 bytes and 174 bytes respectively.
\begin{figure}[!t]
  \centering
  \includegraphics[width=0.8\linewidth]{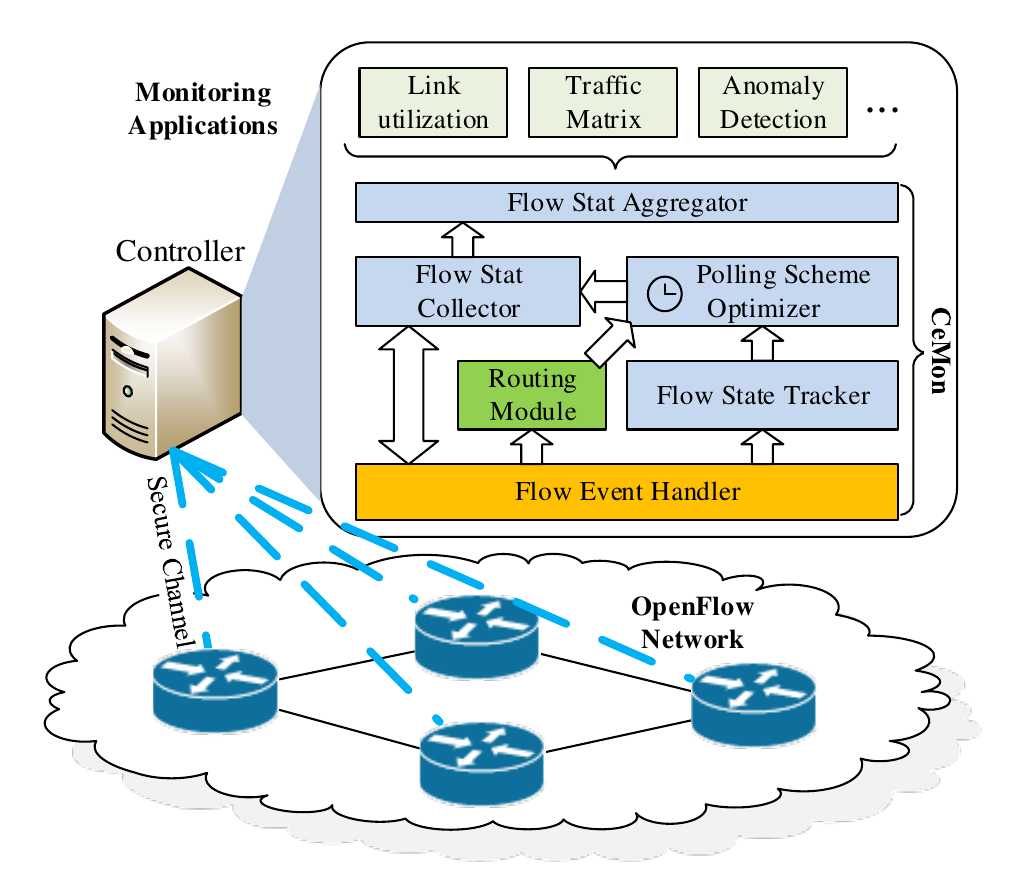}
  \caption{CeMon architecture.}
  \label{fig_architecture}
\end{figure}

The deployment of SDN-based networks can be categorized into two groups: out-of-band deployment and in-band deployment~\cite{openflowspec10, inbandfastrecovery}. The out-of-band deployment transmits the control messages in a dedicated control network. To the contrary, the in-band deployment transmits the control messages and data messages through the same network.
Since the out-of-band deployment isolates the control and data messages, it provides better performance isolation, fault tolerance and privacy. However, it is worth noting that the deployment cost of out-of-band deployment is much higher as a dedicated control network is needed. Therefore, the in-band deployment is preferable in practice. The routing scheme of the control messages for the in-band deployment is determined by the network operator. A separated VLAN can be configured to deliver the control messages.

\subsection{CeMon Architecture} \label{sec_architecture}
Basically, the monitoring task in SDN is accomplished by the controller which is connected to all the switches through a secure channel, which is usually a TCP connection between the controller and the switch. The controller collects real-time flow statistics from the corresponding switches, merges the raw data and passes the results to the upper-layer applications.
\begin{figure}[!t]
  \centering
  \includegraphics[width=\linewidth]{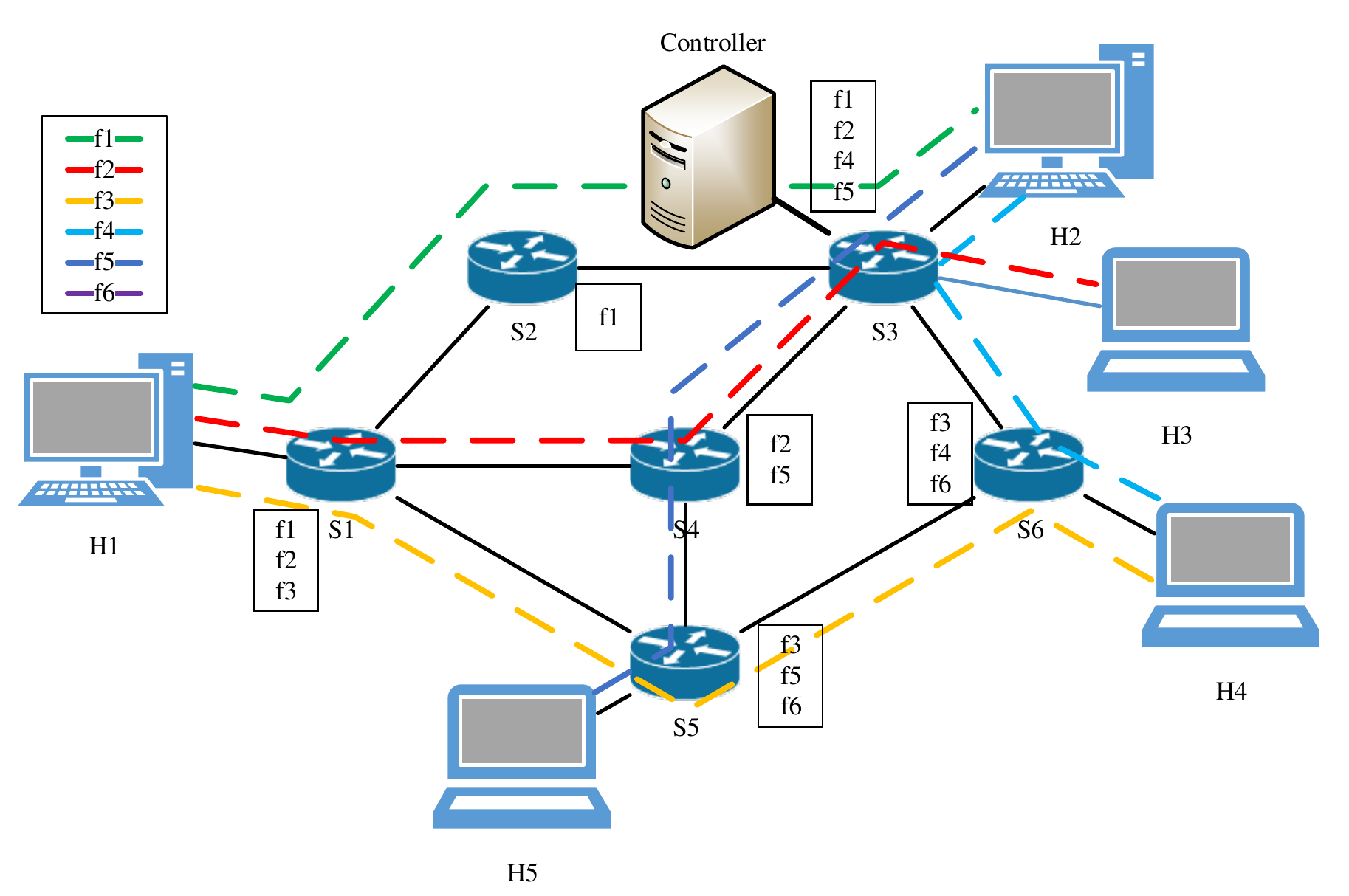}
  \caption{Flow statistics collection example. The network consists of six switches and five hosts. There are six active flows: $f1: H1-H2; f2: H1-H3; f3: H1-H4; f4: H2-H4; f5: H2-H5; f6: H4-H5$. The passing flows of each switch are marked in rectangles. The controller is attached to $S3$ for in-band deployment.}
  \label{fig_motivation}
\end{figure}

We describe the architecture of CeMon in Figure~\ref{fig_architecture}. In general, there are three layers: OpenFlow network layer, CeMon core layer and monitoring application layer. The OpenFlow network layer consists of underlying low-level network devices and keeps connections between the controller and the switches. The CeMon core layer is the heart of the monitoring framework. The flow event handler receives the flow arrival/expiration messages from switches and forwards them to the routing module and the flow state tracker. While the routing module calculates the routing path in terms of the policy specified by the administrator, the flow state tracker maintains the active flows in the network. The routing module and the flow state tracker report the active flow sets and their corresponding routing paths to the polling scheme optimizer respectively. Based on the above information, the polling scheme optimizer computes a cost-effective polling scheme and forwards it to the flow stat collector. The flow stat collector takes the responsibility of polling the flow statistics from the switches and handles the reply. Finally, the flow stat aggregator gathers the raw flow statistics and provides interfaces for the upper monitoring applications. The monitoring application layer is a collection of various tasks such as link utilization, traffic matrix estimation and anomaly detection. The CeMon core layer interacts with the OpenFlow network layer through OpenFlow protocol. The CeMon core layer provides an API to the monitoring application layer to return the statistics of a set of flows, which are specified in the API parameter. Essentially, the CeMon components are controller modules, which interact with each other by function calls.

The architecture of CeMon is compatible with other existing software defined measurement frameworks~\cite{opensketch, dream, opentm, dcm}. Since all these proposals are flow-based measurements, the final stage of the measurement is flow statistics collection. Therefore, these architectures can leverage the optimized polling scheme by CeMon to reduce the monitoring overhead. For example, DREAM~\cite{dream} periodically retrieves flow counters from the switches and passes them to task objects. Because the stage of fetching counters is independent of the task assignment, CeMon can be easily integrated to DREAM by modifying the fetching counter function from the per-flow querying to CeMon polling scheme.

Similarly, since CeMon implements the fetching counter primitive, many measurement tasks can be built on top of CeMon such as link utilization~\cite{payless}, flow size distribution~\cite{flowsizedistribution} and anomaly detection~\cite{adaptiveflowcounting}. For example, to get the utilization of a link, CeMon keeps track of all the active flows that pass the link and periodically polls their statistics. By adding up the utilization of each flow, CeMon constructs the utilization of this link.

As stated, the key challenge for flow statistics collection is the generation of a cost-effective polling scheme. In the following sections, we present two novel polling schemes MCPS and AFPS respectively, where MCPS focuses on gathering all flow statistics in an efficient way, while AFPS aims at polling a small number of flow statistics with high flexibility. 
\begin{figure}[!t]
  \includegraphics[width=\linewidth]{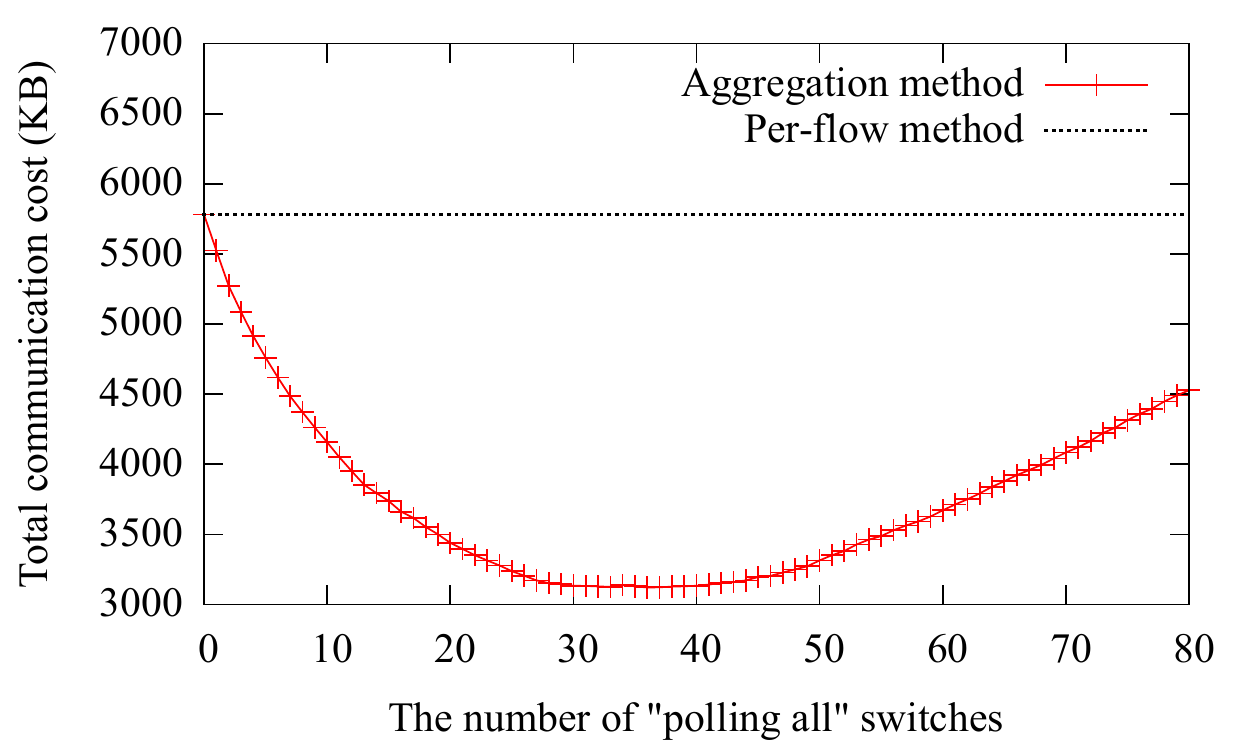}
  \caption{The number of ``polling all'' switches vs. the communication cost in a network with $100$ switches and $20000$ active flows.}
  \label{fig_setcover}
\end{figure}

\section{Maximum Coverage Polling Scheme} \label{sec_mcps}
In this section, we elaborate on the maximum coverage polling scheme in detail. We first motivate the polling all flow statistics problem, then we analysis the communication cost of SDN monitoring systems for both out-of-band and in-band deployments. Formal problem formulations are given thereafter and our solutions are applicable for different deployments and multiple controllers. Due to the computation complexity of the problem, heuristics are proposed to efficiently produce the polling scheme. Finally, we discuss flow dynamics and present corresponding heuristics to tackle this issue.

\subsection{Overview} \label{sec_mcpsoverview}
There are two ways to poll the statistics of a flow from the switches: one is a polling single query which only fetches the counter of the flow, another is a polling all query which fetches all the flow statistics in a switch. Define the communication cost as the sum of the lengths of polling request and reply messages. To collect the statistics of all the active flows, it is promising to combine the polling single query and the polling all query to cover all the flows at minimum communication cost. OpenFlow specification 1.0~\cite{openflowspec10} defines a match structure to identify one flow entry or a group of flow entries. A match structure consists of many fields to match flows such as input switch port, Ethernet source/destination address and source/destination IP. However, it is impractical to select an arbitrary subset of flows with ``segmented'' fields due to the limited expression of a single match structure.

Figure~\ref{fig_motivation} illustrates the polling all flow statistics problem for out-of-band deployment. Consider the four flows passing $S3$, assume the source and destination of these flows are: $f1:(H1,H2)$; $f2:(H1,H3)$; $f4:(H2,H4)$; $f5:(H5, H2)$. Intuitively, for the polling all request and reply, the lengths of the messages are in proportion to the requested number of flows. The total cost of each polling scheme can be measured by Wireshark. The specification specifies the polling request message length: to poll a single flow, the request and the reply lengths are 122 and 174 bytes, respectively. Then, we enumerate all the possible polling schemes and find the most cost-effective one. Intuitively, we prefer to choose the switches which have more active flows. The reason is that we can use less polling requests to obtain more flow statistics. In this example, the optimal solution is querying $S3$ and $S6$, with two requests of a total communication cost of $C_{opt}=1072$ bytes, where polling $S3$ and $S6$ with cost of $488$ and $584$ respectively. Compared with the cost of the per-flow querying with six requests of a cost of $C_{\text{per-flow}}=1776$ bytes, we save about $\frac{1776-1072}{1776}=39.6\%$ of the communication cost. Detailed modeling of the message length is in Section~\ref{sec_formulation}.

The case becomes worse for in-band deployment: the measurement and data traffic shares bandwidth, proactively fetching counters with high frequency notably impacts the efficiency of data transmission. Also, it is relatively complex to compute the communication cost compared with the out-of-band deployment. The hops from the polling switch to the controller should be taken into account. Suppose the controller is attached to $S3$ and we use the shortest path as the control message routing algorithm in Figure~\ref{fig_motivation}. For a single polling, we can query any switch along the flow path. However, the polling cost is different for the in-band deployment as the switches are of different distances to the controller. Without loss of generality, we employ random choosing and minimum cost choosing strategies to compare with our approach. Similar to the out-of-band case, the optimal solution is querying $S3$ and $S6$, which yields a cost of $C_{opt}=1560$ bytes. Compared with a random per-flow querying $C_{\text{randomper-flow}}=4144$ bytes ($f1,f2,f3:S1$; $f4,f5:S3$; $f6:S5$), the minimum cost querying $C_{\text{minimumcostper-flow}}=2368$ bytes ($f1,f2,f4,f5:S3$; $f3,f6:S6$), we save $\frac{4144-1560}{4144}=62.4\%$ and $\frac{2368-1560}{2368}=34.1\%$ of the communication cost respectively.

Essentially, polling flow statistics from one switch as much as possible is a sort of aggregation. However, if this ``polling all'' strategy is excessively used, it brings extra overhead due to repeatedly gathering the same flow statistics from different switches. To further explore the problem, we use a simple greedy algorithm which chooses switches that cover the most number of uncovered flows to collect all the flow statistics. Figure~\ref{fig_setcover} illustrates the increment of total communication cost with the number of ``polling all'' switches varying from $0$ to $80$. The dashed line is the total communication cost of the per-flow method for comparison. For the aggregation method, there has been a steady fall before the number of ``polling all'' switches reaches $30$. After reaching the bottom, the total communication cost rises gradually until all the active flows have been covered. This observation motivates us to globally optimize the polling strategy which minimizes the monitoring overhead.

\subsection{Problem Formulation} \label{sec_formulation}
As mentioned in Section~\ref{sec_mcpsoverview}, we can poll flow statistics from a switch by two strategies: (1) exact match of one flow and (2) wildcarding all fields to collect all flows. The benefits of the latter strategy are that we reduce the number of request messages and repeated reply headers. On the other hand, excessive usage of the second strategy imposes extra communication cost as there are overlap flow statistics in the replies. Therefore, the problem can be formulated as an optimization problem whose objective is to minimize the communication cost.

The target network is an undirected graph $G=(V,E)$, where $V=\{v_1,v_2,\ldots,v_n\}$ is the set of switches and $E$ represents the set of links. Therefore, $n=|V|$ is the number of switches in the network. There are $m$ active flows in the network $F=\{f_1, f_2, \ldots f_m\}$ (called the universe), where each element $f_i,i=1,2,\ldots, m$ corresponds to a sequence of switches $P_i$ that represents the flow routing path with length $l$: $P_i=(v_{j_1},v_{j_2}, \ldots, v_{j_l}), j_q \in [1, n],q \in [1,l]$. Let set $s_i$ denote the active flows in switch $v_i$. Then, $s_i$ can be generated by $F$ and $P_i$. Assume the polling all set $S=\{s_1,s_2,\ldots, s_n\}$, the active flow number in switch $v_i$ is $|s_i|$. Let $l_{req}$ denote the length of the flow statistics request message, $l_{rh}$ denote the length of flow statistics reply message header, $l_{sf}$ denote the length of reply message body of a single flow entry. For a flow statistics reply message with $n$ entries, the whole reply message length $l_{reply}(n)$ is a linear function of $n$\footnote{According to the OpenFlow specification~\cite{openflowspec10}, $l_{req}=122$ bytes, $l_{replyheader}=78$ bytes, $l_{singleflowentry}=96$ bytes.}:
\begin{equation} \label{eq_replymsg}
  l_{reply}(n) = l_{replyheader}+n*l_{singleflowentry}
\end{equation}

\subsubsection{Single Controller}
Given the network graph $G$, the switch set $S$ and the active flow set $F$, let $w_i$ denote the cost of polling all flow statistics from switch $v_i$, $w_i'$ denote the cost of polling a single flow statistics from switch $v_i$. The costs $w_i$ and $w_i'$ are given by:
\begin{enumerate}
\item
  Out-of-band deployment.
  \begin{equation} \label{eq_outofbandweight}
    \begin{array}{ll}
      w_i = l_{req}+l_{reply}(|s_i|), & \forall i \in S \\
      w_i' = l_{req}+l_{reply}(1), & \forall i \in S
    \end{array}
  \end{equation}
\item
  In-band deployment. Given the controller location and the control message routing algorithm, let $h_i$ represent the hops from switch $v_i$ to the controller.
  \begin{equation} \label{eq_inbandweight}
    \begin{array}{ll}
      w_i = (l_{req}+l_{reply}(|s_i|))*h_i, & \forall i \in S \\
      w_i' = (l_{req}+l_{reply}(1))*h_i, & \forall i \in S
    \end{array}
  \end{equation}
\end{enumerate}

Let $q_i$ denote the minimum cost of polling a single flow $i$:
\begin{equation}
  q_i=\min\limits_{i \in s_j}\{w_j'\}, \forall i \in F
\end{equation}

The binary variable $x_i$ indicates whether to poll all flow statistics from switch $v_i$ or not, $y_i$ indicates whether flow $i$ is polled or not. The integer linear programming (ILP) formulation of the problem is given by:
 \begin{equation} \label{eq_problem}
  \begin{split}
    \min & \sum\limits_{i \in S}w_ix_i+ \sum\limits_{i \in F}q_iy_i \\
    \text{subject to:} & \sum\limits_{i \in s_j}x_j+y_i \ge 1, \forall i \in F\\
    & x_i \in \{0,1\}, \forall i \in S \\
    & y_i \in \{0,1\}, \forall i \in F
  \end{split}
\end{equation}

The polling scheme consists of many ``polling all'' and ``polling single'' rules, where the former rules are associated with switches, the latter rules are associated with a mapping $flowid \mapsto switch$. We justify how to obtain the polling scheme from Eq.~\ref{eq_problem}. For all $x_i=1$, adding requests of polling all flow statistics from switch $v_i$; for all $y_i=1$, adding requests of polling single flow statistics from switch $v_{\argmin\limits_{i}{q_i}}$.

\subsubsection{Multiple Controllers} \label{sec_multi}
A more general case is that there are multiple controllers in the network. Any controller can be selected to collect flow statistics from switches. Obviously, selecting a ``nearby'' controller is more cost-efficient. To formulate this problem, we introduce a set $C=\{c_1,c_2,\ldots,c_t\}$ to denote the available controllers. The number of the controllers is $t=|C|$. Let $w_{ij}$ denote the cost of polling all flow statistics in switch $v_i$ from controller $c_j$, $w_{ij}'$ denote the cost of polling a single flow statistics in switch $v_i$ from controller $c_j$. $w_{ij}$ and $w_{ij}'$ are given by:
\begin{enumerate}
\item
  Out-of-band deployment.
  \begin{equation} \label{eq_multioutofbandweight}
    \begin{array}{ll}
      w_{ij} = l_{req}+l_{reply}(|s_i|), & \forall i \in S \\
      w_{ij}' = l_{req}+l_{reply}(1), & \forall i \in S
    \end{array}
  \end{equation}
\item
  In-band deployment. Given the controller locations and the control message routing algorithm, let $h_{ij}$ represent the hops from switch $v_i$ to controller $c_j$.
  \begin{equation} \label{eq_multiinbandweight}
    \begin{array}{ll}
      w_{ij} = (l_{req}+l_{reply}(|s_i|))*h_{ij}, & \forall i \in S, \forall j \in C \\
      w_{ij}' = (l_{req}+l_{reply}(1))*h_{ij}, & \forall i \in S, \forall j \in C
    \end{array}
  \end{equation}
\end{enumerate}

Let $q_{ij}$ denote the minimum cost of assigning controller $c_j$ to poll a single flow $i$:
\begin{equation}
  q_{ij}=\min\limits_{i \in s_k}\{w_{kj}'\}, \forall i \in F, \forall j \in C
\end{equation}

The binary variable $x_{ij}$ indicates whether to poll all flow statistics in switch $v_i$ from controller $c_j$ or not, $y_{ij}$ indicates whether to poll single flow $i$ by controller $c_j$ or not. The integer linear programming (ILP) formulation of the problem is:
 \begin{equation} \label{eq_multiproblem}
  \begin{split}
    \min & \sum\limits_{i \in S}\sum\limits_{j \in C}w_{ij}x_{ij}+ \sum\limits_{i \in F}\sum\limits_{j \in C}q_{ij}y_{ij} \\
    \text{subject to:} & \sum\limits_{i \in s_j}\sum\limits_{k \in C}x_{jk}+\sum\limits_{j \in C}y_{ij} \ge 1, \forall i \in F\\
    & x_{ij} \in \{0,1\}, \forall i \in S, \forall j \in C \\
    & y_{ij} \in \{0,1\}, \forall i \in F, \forall j \in C
  \end{split}
\end{equation}

We justify how to obtain the polling scheme from Eq.~\ref{eq_multiproblem}. For all $x_{ij}=1$, adding requests of polling all flow statistics from switch $v_i$ by controller $c_j$; for all $y_{ij}=1$, adding requests of polling single flow statistics from switch $v_{\argmin\limits_{i}{q_{ij}}}$ by controller $c_j$.
\begin{algorithm}[!t] \footnotesize
  \KwIn{$G=(V,E)$: the network; $F$: the active flows; $H$: the number of hops vector}
  \KwOut{$S$: candidate polling set, $W$ the corresponding cost vector}
  $S \gets \{(s_1: \emptyset),(s_2: \emptyset),\ldots, (s_n: \emptyset)\}$ \;
  $W \gets []$ \tcp*{the weight vector for $S$}
  \ForEach{$f \in F$} {
    \ForEach{$v \in P_f$} {
      $S[v] \gets S[v].append(f)$\;
    }
    $S \gets S \cup \{f\}$ \tcp*{Add single flow polling set}
  }
  \ForEach{$s \in S$} {
    $W[s]=(l_{req}+l_{reply}(|s|))*H[s]$
  }
  \Return{$S, W$}
  \caption{Construct Cost Functions}
  \label{algo_constructsetcover}
\end{algorithm}

\subsection{Solutions}
In this section, we first describe the optimal solution for this problem by exhaustive search. As the computing complexity of the aforementioned formulations are NP-hard, we propose heuristics to approximate the optimal performance.

\subsubsection{Optimal Solution}
The optimal solution is the minimum cost among the sum of all possible combinations of sets, which covers all the active flows. It can be obtained by a brute-force search algorithm, which is shown in Algorithm~\ref{algo_brute}. We refer to this algorithm as ``optimal''. The size of the given sets is $m+n$, where $m$ is the active flow number and $n$ is the switch number. The size of the possible combinations is $\sum\limits_{l=1}^{|S|}{|S|\choose l}=2^l-1$. Therefore, the complexity of the brute-force search algorithm is $O(2^{m+n})$, which is exponential to the number of switches and active flows. This optimal algorithm is not scalable for large networks with plenty of active flows.

\subsubsection{Heuristics}
Both formulations Eq.~\ref{eq_problem} and Eq.~\ref{eq_multiproblem} are the weighted set cover problem, which is proved to be NP-hard~\cite{approximationbook}. We propose a greedy strategy which selects the most cost-effective switches until all the active flows are covered. The algorithm is shown in Algorithm~\ref{algo_greedy}. The main loop iterates for $O(n)$ time, where $n=|F|$. The most cost-effective set $s$ can be found in $O(\log m)$ time by a priority queue, where $m=|S|$. So the computational complexity of the algorithm is $O(n \log m)$. The analysis of the algorithm is listed below. Without loss of generality, we define the sets as the union of the polling all and the polling single sets. Each set is represented as $P_i$, which is a set of polling flows.
\begin{theorem}
  Algorithm~\ref{algo_greedy} is an $H(p)$-approximation, where $p=\max_i\{|P_i|\}$, $H(p)$ is the $p$-th harmonic number.
\end{theorem}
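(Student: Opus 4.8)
The plan is to use the standard pricing (charging) argument for greedy weighted set cover, specialized to the unified ground set $F$ of flows and the sets $P_i$ (both the ``polling all'' sets $s_i$ and the singleton ``polling single'' sets), where I write $c(P_i)$ for the cost of set $P_i$ (i.e.\ the appropriate $w_i$ or $q_i$). The central device is to redistribute the total cost paid by the greedy algorithm onto the individual flows as \emph{prices}, and then to bound, set by set of an optimal cover, how much price the flows of a single set can accumulate.

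First I would define the price of each flow. Each time the greedy step selects a set $P_i$ --- that is, the set currently minimizing the cost-effectiveness ratio $c(P_i)/|P_i \setminus \mathcal{C}|$, where $\mathcal{C}$ is the set of already-covered flows --- I would charge every newly covered flow $e \in P_i \setminus \mathcal{C}$ the price $\mathrm{price}(e) = c(P_i)/|P_i \setminus \mathcal{C}|$. By construction the total greedy cost equals $\sum_{e \in F}\mathrm{price}(e)$, since every flow is covered exactly once and its price is exactly its share of the cost of the set that first covers it.

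The key lemma, and the step I expect to be the main obstacle, is the per-set price bound. Fix any set $P$ belonging to an optimal solution and order its flows $e_1, e_2, \ldots, e_k$ (with $k = |P| \le p$) in the order in which the greedy algorithm covers them. At the instant just before $e_j$ is covered, all of $e_j, \ldots, e_k$ are still uncovered, so at that instant $P$ itself is an eligible candidate covering at least $k-j+1$ uncovered flows, with cost-effectiveness at most $c(P)/(k-j+1)$. Since greedy always selects the most cost-effective available set, the price charged to $e_j$ cannot exceed the cost-effectiveness of $P$ at that instant, giving $\mathrm{price}(e_j) \le c(P)/(k-j+1)$. The delicate part is to argue carefully that $P$ genuinely remains a candidate with at least the claimed number of uncovered elements at each such step (including when several flows of $P$ are covered in one greedy step, which only strengthens the bound); this is where the definition $p = \max_i|P_i|$ enters, since it caps every $k$ by $p$. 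Summing over $j$ then yields
\begin{equation*}
\sum_{e \in P}\mathrm{price}(e) \;\le\; c(P)\sum_{i=1}^{k}\frac{1}{i} \;=\; c(P)\,H(k) \;\le\; c(P)\,H(p).
\end{equation*}

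Finally I would assemble the pieces. Letting $\mathcal{O}$ denote an optimal cover and using that every flow lies in at least one set of $\mathcal{O}$,
\begin{equation*}
\mathrm{cost}(\text{greedy}) \;=\; \sum_{e \in F}\mathrm{price}(e) \;\le\; \sum_{P \in \mathcal{O}}\sum_{e \in P}\mathrm{price}(e) \;\le\; \sum_{P \in \mathcal{O}} c(P)\,H(p) \;=\; H(p)\cdot \mathrm{OPT},
\end{equation*}
where the first inequality holds because summing prices over the sets of $\mathcal{O}$ counts each flow at least once. This is exactly the claimed $H(p)$-approximation guarantee.
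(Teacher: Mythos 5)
Your proposal is correct and follows essentially the same charging argument as the paper: both define a per-flow price equal to the selected set's cost divided by the number of newly covered flows, bound the total price accumulated by any single set via the greedy choice and a harmonic sum, and then sum over the sets of an optimal cover. If anything, your writeup is slightly cleaner in its index bookkeeping, but the decomposition and key lemma are identical to the paper's.
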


\begin{proof}
  Assume that the algorithm selects the polling set $P_1, P_2, \ldots P_k$ in this order to form the polling scheme. Consider a flow $f$ which is first covered when $P_i$ is selected. Suppose $R$ is the set of remaining uncovered flows when $P_i$ is selected. Define the cost of covering a flow $f$ as $c_f = \frac{w(P_i)}{|P_i \cap R|}$, where $w(P_i)$ is the cost of polling $P_i$. According to the definition of $c_f$, the following equality holds:
  \begin{equation} \label{eq_setcover1}
    \sum\limits_{P_i \in C}w(P_i) = \sum\limits_{f \in F}c_f
  \end{equation}
  For an arbitrary polling set $P_k=\{f_1, f_2, \ldots f_d\}$, suppose $f_i$ is selected before $f_j$ if $i \le j$. When $f_j$ is covered, $R \subseteq \{f_j, f_{j+1} \ldots, f_d\}$. Therefore, the cost of polling $P_k$ is $\frac{w(P_k)}{|P_k \cap R|} \le \frac{w(P_k)}{d - j + 1}$. Assume $P_i$ is the selected set by the algorithm. Since $P_i$ is the most cost-efficient polling set, we have:
  \begin{equation} 
    \frac{w(P_i)}{|P_i \cap R|} \le \frac{w(P_k)}{|P_k \cap R|} \le \frac{w(P_k)}{d - j + 1}
  \end{equation}
  Then, the sum of the cost of all elements in $P_k$ is given by:
  \begin{equation} 
    \sum\limits_{f \in P_k}{c_f}=\sum\limits_{i=1}^d{\frac{w(P_i)}{|P_i \cap R|}} \le \sum\limits_{i=1}^d{\frac{w(P_k)}{d-i+1}} = H(d)w(P_k)
  \end{equation}
  Consider $p=\max_i\{|P_i|\}$ and let $P_i$ denote the corresponding polling set, we have:
  \begin{equation} \label{eq_setcover2}
    H(p)w(P_i) \ge \sum\limits_{f \in P_i}c_f
  \end{equation}
  Since the total number of a set cover elements is greater or equal to the number of elements in the universe $F$, the inequality holds:
  \begin{equation} \label{eq_setcover3}
    \sum\limits_{P_i \in C^*}\sum\limits_{f \in P_i}c_f \ge \sum\limits_{f \in F}c_f
  \end{equation}
  Let $C^*$ denote the optimal polling scheme, $C$ denote the polling scheme generated by our algorithm. Combining Eq.~\ref{eq_setcover1}, Eq.~\ref{eq_setcover2} and Eq.~\ref{eq_setcover3}:
  \begin{equation}
    \begin{split}
      w(C^*) &=\sum\limits_{P_i \in C^*}w(P_i) \ge \sum\limits_{P_i \in C^*}{\frac{1}{H(p)}\sum\limits_{f \in P_i}c_f} \\
      &=\frac{1}{H(p)}\sum\limits_{P_i \in C^*}\sum\limits_{f \in P_i}c_f \\
      &\ge \frac{1}{H(p)}\sum\limits_{f \in F}c_f = \frac{1}{H(p)}\sum\limits_{P_i \in C}w(P_i)\\
      &=\frac{1}{H(p)}w(C)
    \end{split}
  \end{equation}
  This shows that Algorithm~\ref{algo_greedy} is an $H(p)$-approximation.
\end{proof}

Next, we analyze the accuracy of the heuristic. Due to the congestion link and the matching issues, the collected flow counter may be different from the real one. We emulate the packet loss by introducing two parameters: packet loss rate $r$ and loss switch ratio $p$. The switches are divided into two categories: normal switch and loss switch. When a packet passes a loss switch, it is dropped with a probability of the packet loss rate.
\begin{theorem}
  The accuracy of a flow can be estimated by $1-\frac{(lp+1)r(1-r)^{lp}}{1-(1-r)^{lp+1}}$, where $l$ is the number of switches along its routing path.
\end{theorem}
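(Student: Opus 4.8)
The plan is to first make the loss model precise before worrying about the closed form. Along the flow's routing path of $l$ switches a fraction $p$ are loss switches, so there are $lp$ loss switches (I would treat $lp$ as the effective, possibly non-integer, count of loss switches — a modeling simplification). I would model the fate of a single packet as a sequence of independent Bernoulli$(r)$ trials: at each loss switch it is dropped with probability $r$, independently, while normal switches never drop it. The natural bookkeeping is to track the expected counter value at each counting point $j=0,1,\dots,lp$ along the path, which is proportional to the survival probability $(1-r)^j$, so that the expected counts form a geometric sequence. The $+1$ in the exponent $lp+1$ forces me to count $lp+1$ loss opportunities (the $lp$ loss switches together with one endpoint/ingress counting point), and fixing exactly how that extra point enters is the delicate part.

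Next I would formalize \emph{accuracy} and reduce it to a standard probability. Writing $B_0,\dots,B_{lp}$ for the $lp+1$ independent Bernoulli$(r)$ loss indicators, I would observe the two key identities $(lp+1)\,r\,(1-r)^{lp}=\Pr[\sum_i B_i = 1]$ and $1-(1-r)^{lp+1}=\Pr[\sum_i B_i \ge 1]$. Hence the subtracted term is the conditional probability $\Pr[\sum_i B_i = 1 \mid \sum_i B_i \ge 1]$, and the quantity in the statement is its complement $\Pr[\sum_i B_i \ge 2 \mid \sum_i B_i \ge 1]$. Equivalently, since $r\sum_{j=0}^{lp}(1-r)^j = 1-(1-r)^{lp+1}$, the subtracted term can be rewritten as $\frac{(lp+1)(1-r)^{lp}}{\sum_{j=0}^{lp}(1-r)^j}$, i.e. the ratio of the terminal (most attenuated) expected counter to the mean expected counter over the $lp+1$ points; I would compute this ratio directly from the geometric sequence of expected counts. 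The task then reduces to arguing that the measured-versus-real counter discrepancy for the flow is indeed captured by this conditional event / attenuation ratio.

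Finally I would substitute the finite geometric sum $\sum_{j=0}^{lp}(1-r)^j = \frac{1-(1-r)^{lp+1}}{r}$ and collect terms to recover exactly $1-\frac{(lp+1)r(1-r)^{lp}}{1-(1-r)^{lp+1}}$; this last algebra is routine. I expect the main obstacle to be the modeling step rather than the computation: pinning down a definition of accuracy precise enough that it \emph{provably} equals the above ratio, justifying the independence of per-switch drops, the replacement of the integer loss-switch count by $lp$, and above all the appearance of $lp+1$ (not $lp$) loss opportunities, which hinges entirely on how the ingress and terminal counting points are treated. Once that definition is fixed, the remainder is a short geometric-series evaluation.
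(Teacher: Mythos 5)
You have correctly reverse-engineered the algebra --- a geometric sum over $lp+1$ positions, with $\sum_{i=0}^{lp}(1-r)^{-i} = (1-r)^{-lp}\sum_{j=0}^{lp}(1-r)^{j} = \frac{1-(1-r)^{lp+1}}{r(1-r)^{lp}}$ --- but the step you explicitly defer as ``the main obstacle'' (pinning down what \emph{accuracy} means so that it provably equals this ratio) is the entire content of the paper's proof, and you do not resolve it. The paper's model is not a per-packet Bernoulli process with $lp+1$ ``loss opportunities.'' Rather: the flow has $lp$ loss switches on its path, the polled switch sits downstream of some number $i$ of them, and $i$ is treated as uniform on $\{0,1,\dots,lp\}$ because the polling scheme picks an essentially arbitrary switch on the path (``polling all'' or ``polling single''). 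Given $i$, the attenuation is used deterministically at the level of expected counters: the real counter is $c/(1-r)^{i}$, where $c$ is the polled value. The paper then \emph{defines} the reconstructed real counter as the average $c^{*}=\frac{1}{lp+1}\sum_{i=0}^{lp}\frac{c}{(1-r)^{i}}$ and the theorem's quantity as $1-\frac{c}{c^{*}}$, after which the geometric sum gives the closed form. So the $+1$ is the number of possible values of $i$ (equivalently, of counting positions relative to the loss switches), not an extra loss opportunity at the ingress.

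Of your two candidate readings, (b) is the right skeleton in disguise: the ratio of the terminal expected count to the arithmetic mean of the counts over the $lp+1$ positions equals $c/c^{*}$ exactly because of the re-indexing identity above, so your ``geometric sequence of expected counts at counting points $j=0,\dots,lp$'' picture would carry you to the formula once you adopt the paper's definition of $c^{*}$ as the uniform average of the inflated readings. Reading (a), the conditional probability $\Pr[\text{exactly one loss}\mid\text{at least one loss}]$, is a numerical coincidence with no role in the argument; pursuing it would be a dead end. One further definitional surprise you could not have guessed: the paper's ``accuracy'' $1-c/c^{*}$ tends to $0$ as $r\to 0$ and to $1$ as $r\to 1$, i.e.\ it behaves as a relative discrepancy rather than an accuracy in the usual sense, which is consistent with your complement-probability reading but opposite to the naming.
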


\begin{proof}
  The selection of a switch along the routing path of the flow can be regarded as a random event (no matter by ``polling all'' or by ``polling single''). The number of loss switches for this flow can be obtained by $l \cdot p$. Let $c^*$ and $c$ denote the real flow counter and the polled flow counter respectively. If there are $i$ congested switches, the real flow counter $c^*$ can be computed by $\frac{c}{(1-r)^i}$. Assume the congested switches are placed randomly along its routing path. The real flow counter can be computed by enumerating all possible number of congested switches along its path:
  \begin{align}
    c^* &= \frac{1}{lp+1}\sum\limits_{i=0}^{lp}\frac{c}{(1-r)^i} \notag\\
    &=\frac{c}{lp+1} \cdot \frac{1-(\frac{1}{1-r})^{lp+1}}{1-\frac{1}{1-r}} \notag\\
    &= \frac{c}{lp+1} \cdot \frac{1-(1-r)^{lp+1}}{r(1-r)^{lp}} \notag
  \end{align}
  Then, the accuracy of the flow can be computed as:
  \begin{displaymath}
    1-\frac{c}{c^*}= 1-\frac{(lp+1)r(1-r)^{lp}}{1-(1-r)^{lp+1}}
  \end{displaymath}
\end{proof}

\begin{algorithm}[!t] \footnotesize
  \KwIn{$S$: candidate polling set; $W$ the corresponding cost vector}
  \KwOut{$Pa$: the polling all set; $Pb$: the polling single set; $mincost$}
  $Pa \gets [], Pb \gets []$ \;
  $mincost \gets +\infty$ \;
  \ForEach{$l \gets 1$ to $|S|$} {
    \While{$C \gets NextCombination(S, l)$} {
      \If{$Cost(C) < mincost$} {
        $mincost \gets Cost(C)$ \;
        $Pa \gets PollAll(C)$ \;
        $Pb \gets PollSingle(C)$ \;
      }
    }
  }
  \Return{$Pa, Pb, mincost$}
  \caption{Optimal Polling Scheme Generation}
  \label{algo_brute}
\end{algorithm}

\subsection{Handling Flow Dynamics}
CeMon generates the optimized polling scheme periodically to keep the scheme updated. However, the active flows in the network change from time to time and make the current polling scheme sub-optimal. In this section, we propose a novel heuristic to handle flow dynamics. We also discuss how to strike a trade-off between the computing efficiency and the performance of the polling scheme by adaptively adjusting the reconstruction frequency.

\subsubsection{Heuristic}
CeMon detects the flow dynamics by the flow state tracker. Intuitively, the polling scheme optimizer has to re-calculate the polling scheme upon receiving flow arrival/expiration messages. However, we argue that this is not necessarily true in practice. So we propose another heuristic called ``Dynamic Adjust and Periodical Reconstruction'' (DAPR) to handle flow dynamics:
\begin{itemize}
\item
  When a new flow arrives: if it has been covered by the current polling scheme, no further actions are needed. Otherwise, just add one single flow polling to the polling scheme.
\item
  When a flow expires: if this flow is collected by a single flow polling, remove it from the polling scheme. Otherwise, no actions.
\end{itemize}

The DAPR cannot always keep the polling scheme optimal, because patching the current polling scheme by adding or removing single polling rules has no performance guarantee. However, DAPR prevents the polling scheme from changing too frequently to impose extra overhead on the controller. To keep the polling scheme up to date, we reconstruct the polling scheme periodically.

\subsubsection{Reconstruction Interval}
The DAPR tries to patch the polling scheme to enable it to tolerate the flow dynamics. However, as time elapses, too many patches make the polling scheme sub-optimal and degrade its performance. The question is when to reconstruct the polling scheme? Obviously, reconstruction with a high frequency yields too much computing overhead, while a low frequency cannot guarantee the scheme performance and keep it updated.
\begin{algorithm}[!t] \footnotesize
  \KwIn{$S$: candidate polling set; $W$ the corresponding cost vector}
  \KwOut{$Pa$: the polling all set; $Pb$: the polling single set; $mincost$}
  $Pa \gets [], Pb \gets []$ \;
  $C \gets \emptyset; mincost \gets +\infty$ \;
  \While{$C \neq U$} {
    Find a set $s \in S$ such that $\frac{W[s]}{|s-C|}$ is minimum \;
    \If{$IsPollingAll(s)$} {
      $Pa.append(s)$ \;
    }
    \Else {
      $Pb.append(s)$
    }
    $mincost += W[s]$
  }
  \Return{$Pa, Pb, mincost$}
  \caption{Polling Scheme Generation Heuristic}
  \label{algo_greedy}
\end{algorithm}

A straightforward method is setting a fixed interval to reconstruct the polling scheme. However, the drawback is that it is not responsive to the dramatic flow change. Therefore, we propose an adaptive reconstruction interval (ARI) which takes the flow variance rate into account. Assume $F_r$ is the corresponding active flow set for the latest reconstructed polling scheme, $F_c$ is the current active flow set. Define the flow variance rate $D$ as:
\begin{equation}
  D(F_r, F_c)=\frac{|F_r\cap F_c|}{|F_r|}
\end{equation}

The flow variance rate shows how many flows are still in the original flow sets: the smaller the value, the more flows change in the network. A threshold $\tau$ is provided to measure the degree of the flow variance rate: when $D(F_r, F_c) < \tau$, the polling scheme will be reconstructed. We evaluate the performance of ARI in Section~\ref{sec_evaluation}.

\section{Adaptive Fine-grained Polling Scheme} \label{sec_afps}
The maximum coverage polling scheme collects statistics of all active flows from switches by aggregating polling requests and replies. On the other hand, in many real-world scenarios, measurement applications usually associate with a subset of flows. Therefore, a light-weight and fine-grained flow level polling scheme are necessary for measurement applications. In this section, we propose Adaptive Fine-grained Polling Scheme (AFPS) as a complementary scheme for MCPS. We first formulate the flow level measurements. Then, adaptive sampling algorithms are developed to deliver timely flow information without incurring too much polling overhead.

\subsection{Overview}
Software defined measurements are usually conducted on top of flows. Therefore, flow level monitoring is a fine-grained measurement implementation for upper layer applications. Since the current SDN architecture does not have complex functionalities in the switch, active polling is a practical solution to collect flow statistics. In essence, the active polling is a sort of sampling. The controller gathers the flow statistics periodically by querying switches, and computes the difference between the last two readings. However, the sampling frequency is critical for such measurement implementations. Low sampling frequency imposes less overhead, but it has a high probability of tracking instant traffic changes. To the contrary, high sampling frequency is more likely to identify the traffic spikes, but the communication overhead cannot be neglected. CeMon strikes a better trade-off between the measurement accuracy and overhead by dynamically tuning the sampling frequency in terms of measured traffic. The basic idea is that we collect the flow statistics more frequently when the traffic is busy, and decrease the sampling rate when there is less traffic.

\subsection{Problem Formulation}
Flow level measurements can be formulated as follows. A task $T$ is associated with a flow set $F=\{f_1,f_2, \ldots,f_n\}$. For each flow $f_i \in F$, we poll flow statistics in time $\{s_1,$ $s_2, \ldots, s_m\}$, and obtain the corresponding reading: $C_{f_i}=\{c_{f_1}(s_1),$ $c_{f_2}(s_2), \ldots, c_{f_m}(s_m)\}$. The monitoring result at time $t$ can be defined as a function $M_T(C_{f_1},C_{f_2}, \ldots, C_{f_n}, t)$, where $M_T$ is the operation function for $T$ on the current readings. For example, a link utilization task should measure the link usage during a period $[t-\tau, t)$. Define the operation function $M_T$ as sum, the instant utilization at $t$ is summing all the active flows' utilization during $[t-\tau,t)$. Notice that the corresponding flows have different sampling rates, the current counter can only be obtained by the latest polling of the flow: $c(t)=\mathop{c}\limits_{\argmax\limits_{i}{s_i \le t}}(s_i)$. Therefore, the link utilization task is formulated by:
\begin{equation} \label{eq_lu}
  U(t, t - \tau) = \sum\limits_{i=1}^{n}[\mathop{c_{f_i}}_{\argmax\limits_{i}{s_i \le t}}(s_i) - \mathop{c_{f_i}}_{\argmax\limits_{j}{s_j \le t-\tau}}(s_j)]
\end{equation}

The above example illustrates the link utilization task, however, our formulation can easily be extended to various monitoring tasks. Define the operation function as max, we can detect the heavy hitter flows~\cite{datasummary} in the network; define $M_T$ as a function that sums the flows with the same source IP and destination IP, the formulation describes the traffic matrix estimation. These examples do not intend to show the tricks to formulate the monitoring applications, but to demonstrate that our framework is generic enough to support a wide variety of tasks at a flow level measurement granularity.

\subsection{Tuning Sampling Frequency}
Timely flow statistics collection is crucial for many measurement tasks. The key challenge is how to determine the sampling rate at low-cost and high-accuracy. A straightforward approach is polling flow statistics at a fixed rate, we refer to this method as ``fixed sampling''. The drawback is that it wastes resources when the traffic is slow and cannot grab the traffic spikes in a timely fashion. As a result, adaptive sampling algorithms which adjust the polling frequency according to traffic dynamics are needed. The sampling frequency tuning algorithms should be light-weight, memory-efficient and responsive to traffic changes. To avoid excessive polling, a valid sampling frequency range is provided for all algorithms: $[\tau_{min}, \tau_{max}]$. Also, if a flow is expired before the next polling, CeMon can obtain the flow statistics by its \texttt{FlowRemoved} message. We detail the proposed algorithms in the following sections.

\subsubsection{Proportional Tuning}
To dynamically adjust the sampling frequency, we predict the future packet arrival rates based on historical data. Specifically, a straightforward approach is tuning the sampling frequency according to the traffic change rate: the more the traffic varies, the less the sampling interval and vice versa. Let $\tau_n$ represent the interval at the $n$th sampling, it can be derived from $\tau_{n-1}$:
\begin{equation}
  \tau_n^{pt} = s_{n+1} - s_n = \tau_{n-1} \cdot v \cdot \frac{s_n-s_{n-1}}{c(s_n)-c(s_{n-1})}
\end{equation}

where $v$ is a coefficient for the average of the current traffic volume. We refer to this algorithm as ``Proportional Tuning'' (PT), because the sampling frequency is in proportion to the traffic change rate.

\subsubsection{EWMA Tuning}
PT works well when the traffic is relatively stable. However, the sampling interval generated by PT may fluctuate when the traffic changes dramatically. To avoid such fluctuations, we improve PT by employing a smoothing technique named Exponentially Weighted Moving Average (EWMA)~\cite{ewma}. EWMA takes more historical data into account while placing more emphasis on recent data. The $n$th sampling interval $\tau_n$ is given by:
\begin{equation}
  \tau_n^{ewma} = \alpha \cdot \tau_n^{pt}+(1-\alpha) \cdot \tau_{n-1}^{ewma}
\end{equation}

Where $\alpha$ is a constant smoothing factor between $[0,1]$. We refer to this algorithm as ``EWMA Tuning'' (EWMAT).

\subsubsection{Sliding Window Based Tuning}
Previous tuning algorithms require parameters such as the traffic factor and the smoothing factor. In practice, parameter determination is difficult and error-prone. As such, we develop a Sliding Window based Tuning algorithm (SWT), which adjusts the sampling frequency regarding the statistical measures in a parameter-free style.
\begin{algorithm}[!t] \footnotesize
  \KwIn{$f$: the target flow}
  \KwOut{Register the next polling time}
  $win \gets []$ \tcp*{the sliding window deque}
  $ws \gets 3$ \tcp*{the initial window size}
  $var = Poll(f) - lastreading$ \;
  \If{$var > win.mean + 2*win.stdev$} {
    \tcp{The traffic changes significantly}
    $\tau \gets \max(\tau_{min}, \tau / 2)$ \;
    $ws \gets \min(3, \lceil ws / 2 \rceil)$
  }
  \Else {
    $\tau = \min(\tau_{max}, \tau * 2)$ \;
    $ws \gets ws + 1$
  }
  \If{$win.length > ws$} {
    $win.popfront()$
  }
  $PollEventhandler.Add(f, \tau)$ \;
  \Return{}
  \caption{Sliding Window Based Tuning}
  \label{algo_slidewindow}
\end{algorithm}

A pseudo-code description of SWT for a flow is depicted in Algorithm~\ref{algo_slidewindow}. We maintain a sliding window to store the recent transmitted bytes for the flow. Each time after reading counter in the switch, we judge whether the latest traffic is significantly different from the traffic in the sliding window. If so, we decrease the sampling interval by half. Otherwise, double the sampling interval and update the sliding window with the latest data. Additive-Increase/Multiplicative-Decrease (AIMD) paradigm ~\cite{aimd} is also employed to adjust the window size. The rationale behind this is that when the traffic does not change a lot, the window size should be expanded to keep the recent data stable. Otherwise, the windows size should be decreased quickly to be responsive to instant traffic spikes.

In order to evaluate the performance of the proposed algorithms, we define the measurement error as:
\begin{equation} \label{eq_error}
  R=\frac{1}{N}\sqrt{\sum\limits_{i=1}^{N}{(x_i-\hat{x_i})^2}}
\end{equation}

Where $x_i$ and $\hat{x_i}$ are the actual and measured flow statistics at the $i$th sample respectively, $N$ is the number of polling samples. For link utilization tasks, $x_i=c(s_i)$.
\begin{figure}[!t]
  \centering
  \includegraphics[width=\linewidth]{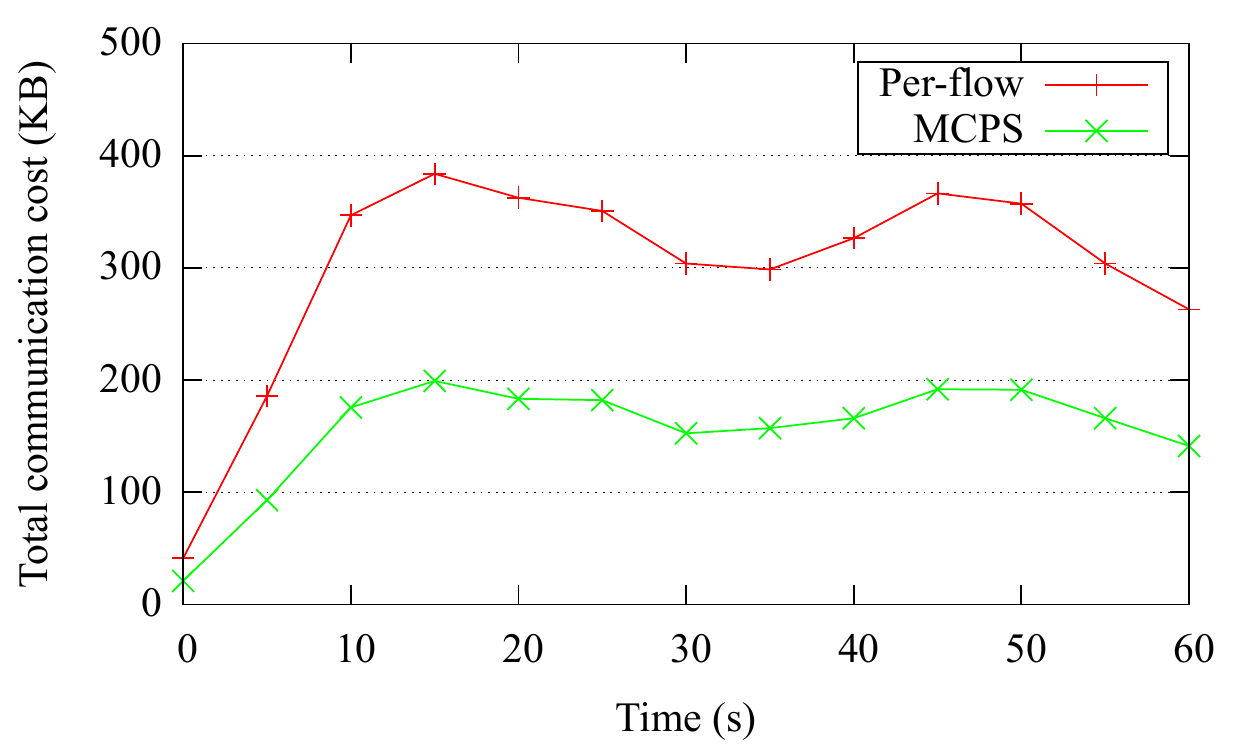}
  \caption{The communication cost comparison of Abilene topology.}
  \label{fig_emulate}
\end{figure}

\section{Evaluation} \label{sec_evaluation}
We evaluate the performance of the MCPS and the AFPS from different aspects such as the reduced communication cost, overhead, accuracy and handling flow dynamics. Extensive experimental results demonstrate that CeMon significantly reduces the monitoring cost with negligible loss of accuracy. 

\subsection{Evaluation Methodology}
\textbf{Network topology and flow generation.} In our prototype emulation, we use a real network topology ``Abilene'' from the Internet topology zoo~\cite{topologyzoo} which consists of $10$ switches. In large scale simulation, two widely used network graph models Erd\H{o}s-R\'enyi graph~\cite{erdos} and Waxman graph~\cite{waxman} are applied to generate huge network graphs to demonstrate the efficiency of our schemes. Unless specified, the experiments are conducted in an Erd\H{o}s-R\'enyi graph with $200$ switches. We generate flows and choose the source and destination from all the hosts in a uniformly random manner.

\textbf{Prototype implementation.}
We implement a prototype of CeMon as a module of POX controller~\cite{pox} to verify its feasibility. We emulate the Abilene network by Mininet~\cite{mininet}, which is a famous emulator in SDN. The experiments are conducted on software switches~\cite{openvswitch}. We use the shortest path algorithm to generate the routing path for each flow. The flows are generated according to a $60$s packet trace UNI1 collected from a datacenter~\cite{trafficchar}.

\textbf{Trace-driven simulation.}
Since the emulation of large networks are resource-hungry and infeasible, we conduct large scale experiments by building a trace-driven simulator written in Python. For the experiments of the DAPR and the AFPS, we use real packet traces collected from a data center~\cite{trafficchar} to perform the simulation. Since the simulation only cares about the active flows and their forwarding paths, we need not replay the traces. Instead, we only simulate the event of flow arrival and expiration in the network. For the DAPR, a $60$s packet trace UNI1 is employed. For the AFPS, two $60$s packet traces UNI1 and UNI2 are employed to represent TCP and UDP traffic respectively.

\textbf{Experiment setup.}
All experiments are conducted on a server equipped with an Intel i7-4770 3.40 GHz CPU processor and 32G RAM. The server runs Ubuntu~12.04 operating system with Python~2.7.

\begin{figure}[!t]
  \centering
  \includegraphics[width=\linewidth]{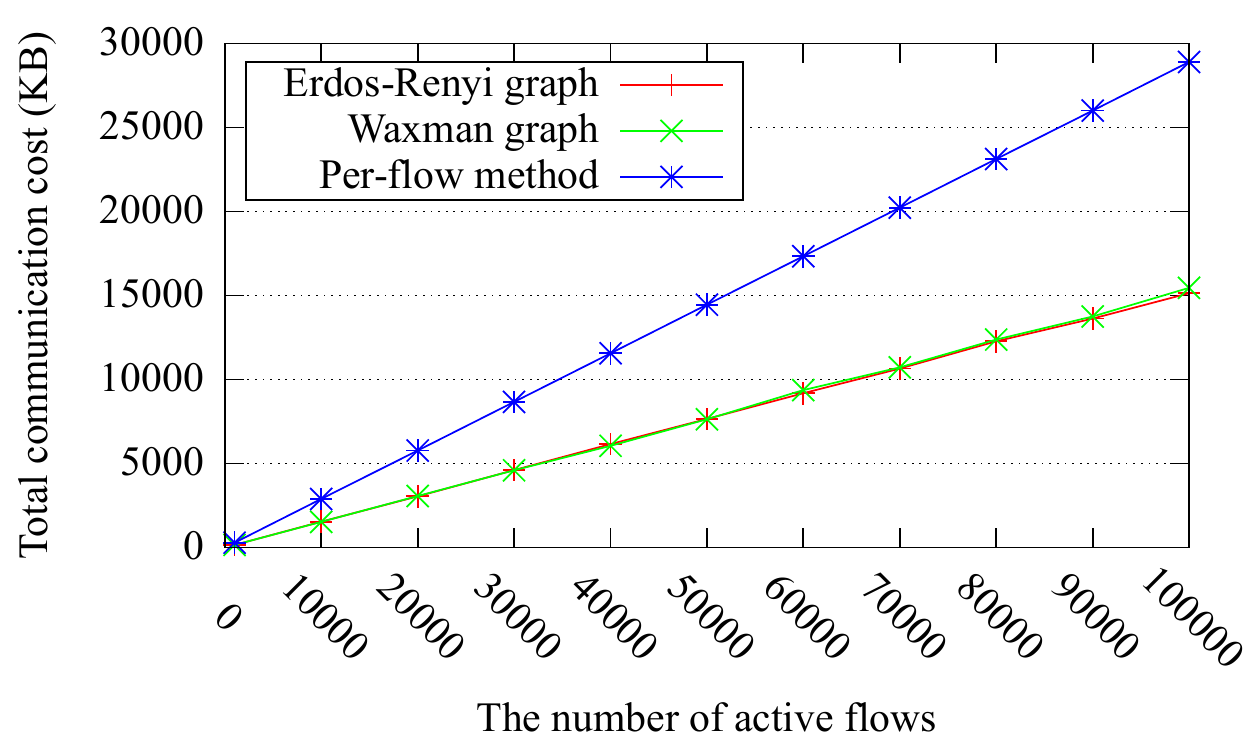}
  \caption{Communication cost for different graph models (out-of-band).}
  \label{fig_cost}
\end{figure}

\subsection{MCPS Results}
\subsubsection{Communication Cost}
We first demonstrate the effectiveness of MCPS in terms of communication cost. We compare it with the basic per-flow querying method proposed in~\cite{opentm}. Figure~\ref{fig_emulate} shows the communication cost from our prototype. The peak number of active flows is $1297$ and the polling interval is $5$s. Obviously, the MCPS is consistently superior to the per-flow querying in terms of the communication cost. Specifically, for the total twelve pollings, the MCPS saves $48.1\%$ of the total monitoring cost on average.

Large scale experiments are conducted by the simulator. Figure~\ref{fig_cost} shows the total communication cost in Erd\H{o}s-R\'enyi graph and Waxman graph for out-of-band deployment. The number of active flows varies from $1000$ to $100000$ which is huge enough for a medium-sized data center. The total communication cost of the per-flow polling method is irrelevant to the network topology, but in proportion to the number of flows. As a result, we plot only one curve for reference in Figure~\ref{fig_cost}. Compared with the per-flow polling method, MCPS significantly reduces the communication cost in both network topologies. It saves up to $47.6\%$ of the total communication cost. The result conforms to our emulation in Figure~\ref{fig_emulate}. Figure~\ref{fig_inband} investigates the effectiveness of MCPS for the in-band deployment, where the ``random per-flow'' strategy is querying the switch randomly along the routing path for each flow, the ``minimum per-flow'' strategy is querying the switch that consumes minimum bandwidth. Clearly, MCPS consistently outperforms the per-flow querying strategy by reducing roughly $50\%$ of the cost as the number of active flows varies. Compared with the original random per-flow query, MCPS saves the cost by up to $50.2\%$ which is slightly better than out-of-band deployment scenario.
\begin{figure}[!t]
  \centering
  \includegraphics[width=\linewidth]{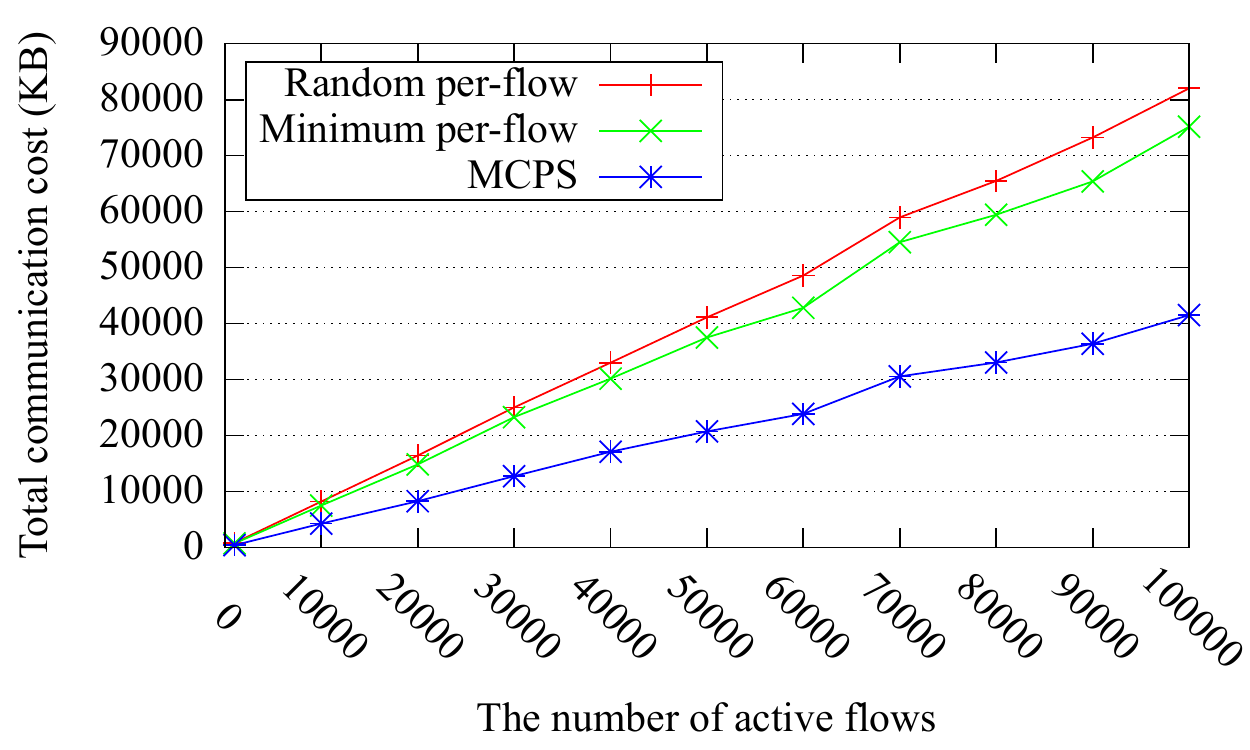}
  \caption{Communication cost for in-band deployment.}
  \label{fig_inband}
\end{figure}

As stated in Section~\ref{sec_multi}, MCPS can be extended to networks with multiple controllers. Since any controller is able to monitor any flow, the ``random per-flow'' strategy is querying a switch by a randomly chosen controller, the ``minimum per-flow'' strategy is querying a switch by the controller which incurs minimum cost. Consider there are more available controllers, the minimum cost of querying a flow is decreased since the average hops from a flow to a controller is shortened. As a result, MCPS can further reduce the communication cost in this case. Figure~\ref{fig_multi} shows the communication cost in a network with $3$ controllers. It shows that MCPS saves $57\%$ of the cost on average as the number of active flow increases. Furthermore, Figure~\ref{fig_controllernum} shows the communication cost gradually decreases when the number of controllers increases. We obtain about $2\%$ to $3\%$ of the cost reduction by adding one controller. More than $10\%$ of the cost reduction can be gained by increasing the controller number from $1$ to $5$.

These experiments illustrate the effectiveness and generality of MCPS: it consistently reduces the communication cost by roughly $50\%$ for polling all flow statistics, regardless of the number of active flows, network topologies, deployment methods, and the number of controllers.
\begin{figure}[!t]
  \centering
  \includegraphics[width=\linewidth]{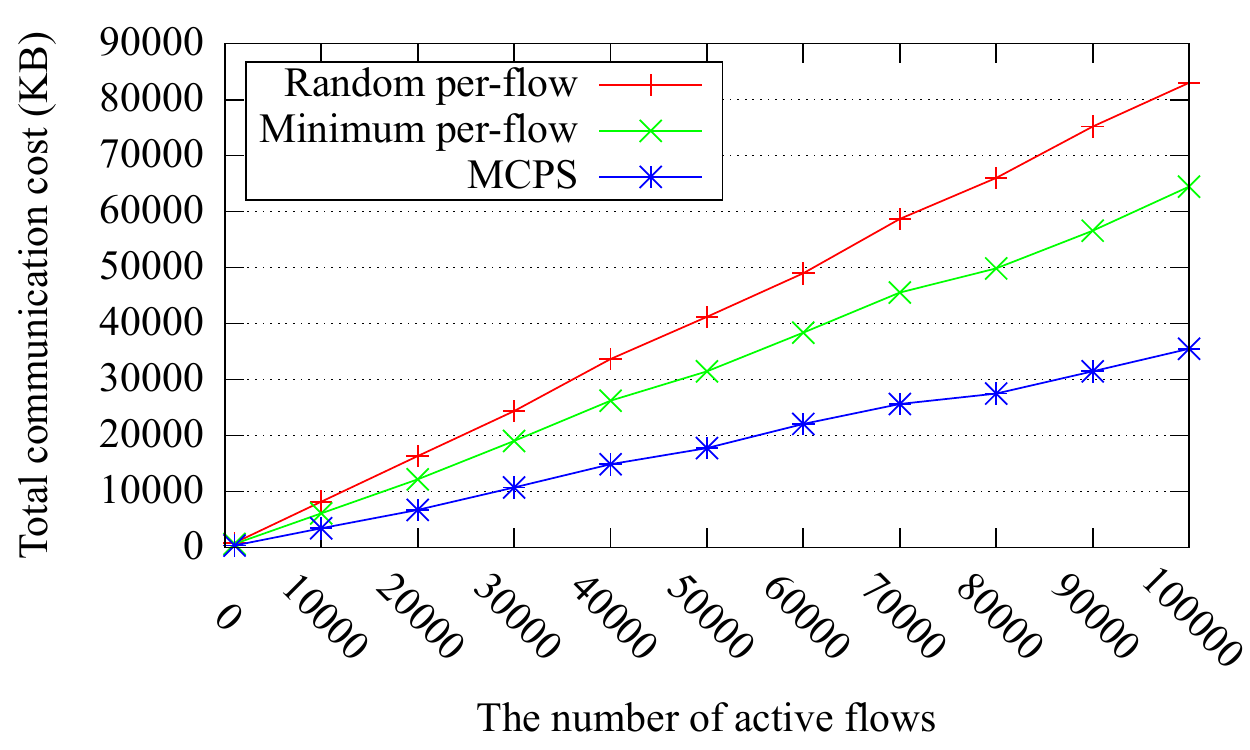}
  \caption{Communication cost for multiple controllers.}
  \label{fig_multi}
\end{figure}

\subsubsection{Efficiency}
In this section, we compare the running time and the gap between the optimal solution and the proposed algorithm for MCPS. Figure~\ref{fig_optvsflow} and Figure~\ref{fig_optvsswitch} show that the greedy heuristic performs fairly close to the optimal as the number of active flows and switches increases. Specifically, the maximum difference between the optimal and our heuristic is less than $7\%$. However, the optimal running time is pretty long. Even for a small network with $10$ switches and $20$ active flows, it generates the optimal solution for more than $5000$s. This is impractical for real time network monitoring as we need to poll the flow statistics at second level~\cite{hedera}. The optimal running time increases exponentially as the number of active flows and switches increases. Comparatively, our heuristic is practical and efficient as it produces near-optimal results in less than $1$ms which is 1 million times faster.
\begin{figure}[!t]
  \centering
  \includegraphics[width=\linewidth]{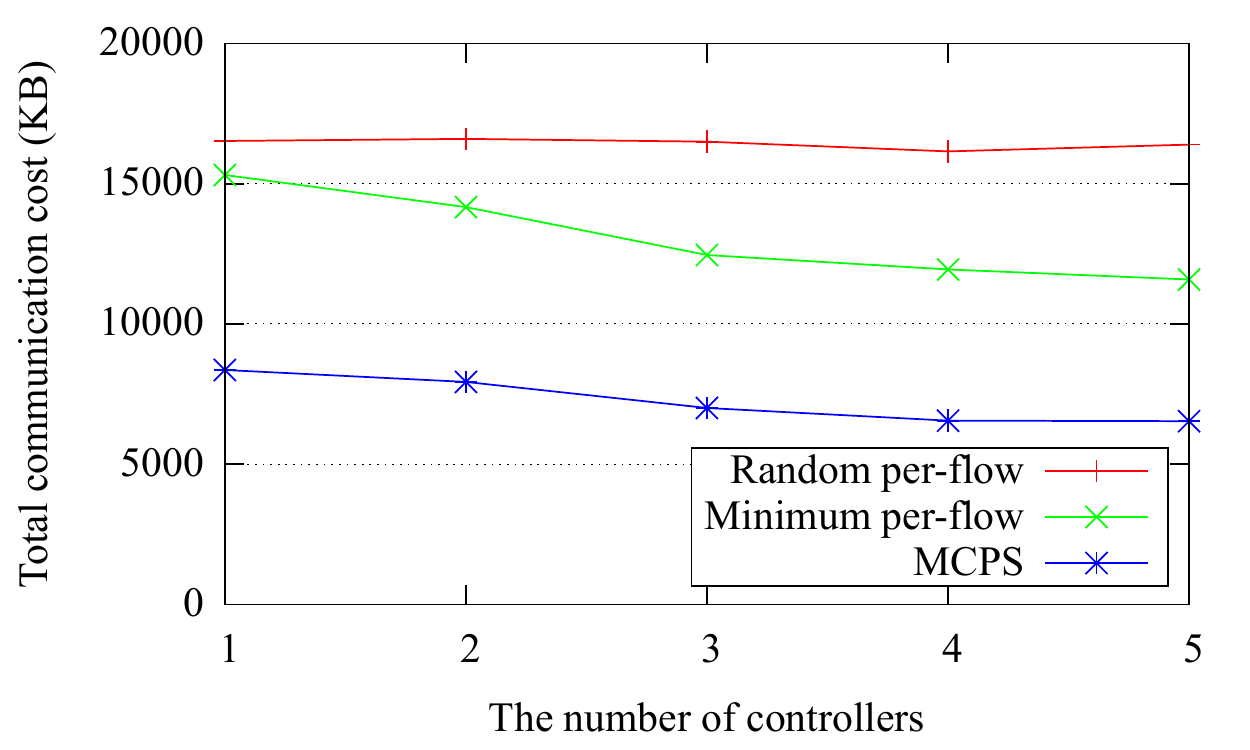}
  \caption{The communication cost as the number of controllers varies (with 20000 active flows in the network).}
  \label{fig_controllernum}
\end{figure}

We examine the construction time of the weighted set cover and the polling scheme generation time in Figure~\ref{fig_constructsetcover}. There is a steady increase in the total computing time over the number of active flows. The problem construction time occupies roughly $10\%$ of the total calculation time. The polling scheme computing time is proportional to the number of active flows (with fixed number of switches) which conforms to the complexity of the greedy algorithm. Our approach obtains the optimized polling scheme very efficiently in practice: for a network with up to $100000$ active flows, we get the polling scheme within $1.6$s. The computation time is able to meet real-world monitoring application polling frequency, such as Hedera~\cite{hedera} which is a data center dynamic flow scheduling system with a control loop of five seconds).
\begin{figure}[!t]
  \centering
  \includegraphics[width=\linewidth]{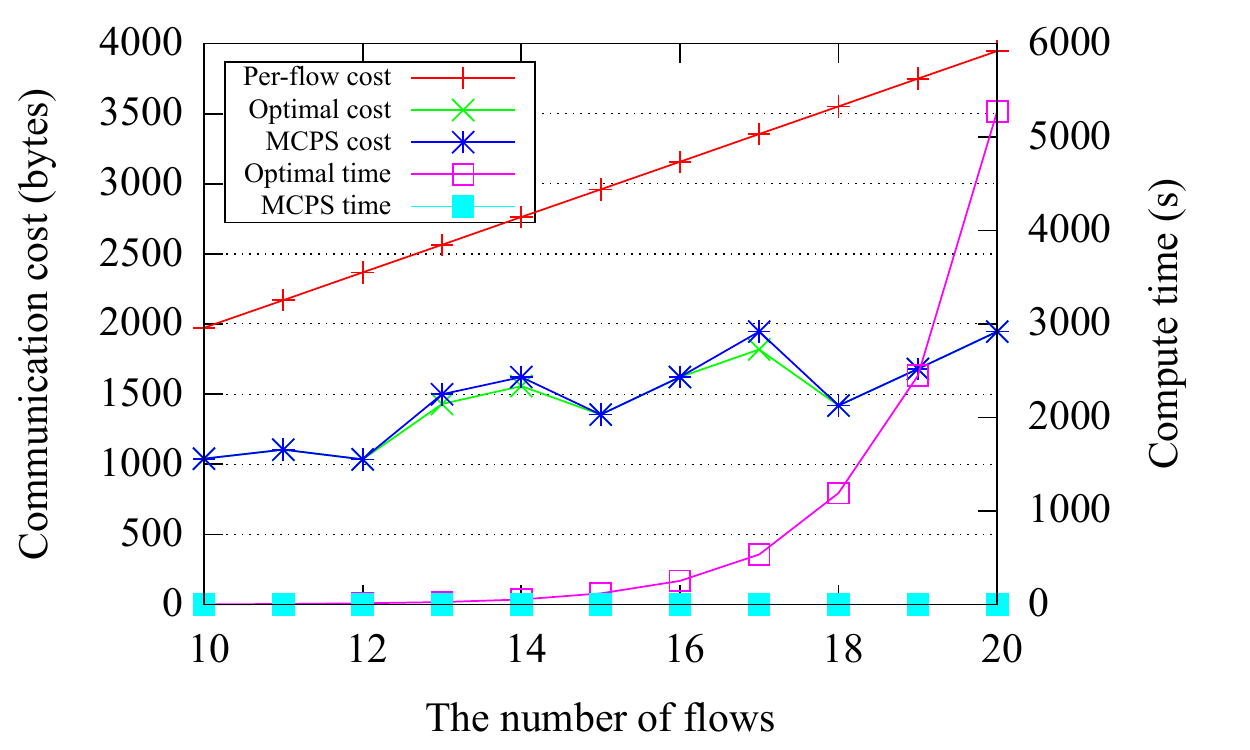}
  \caption{Comparison of the optimal solution and the heuristic as the number of active flows varies.}
  \label{fig_optvsflow}
\end{figure}

\begin{figure}[!t]
  \centering
  \includegraphics[width=\linewidth]{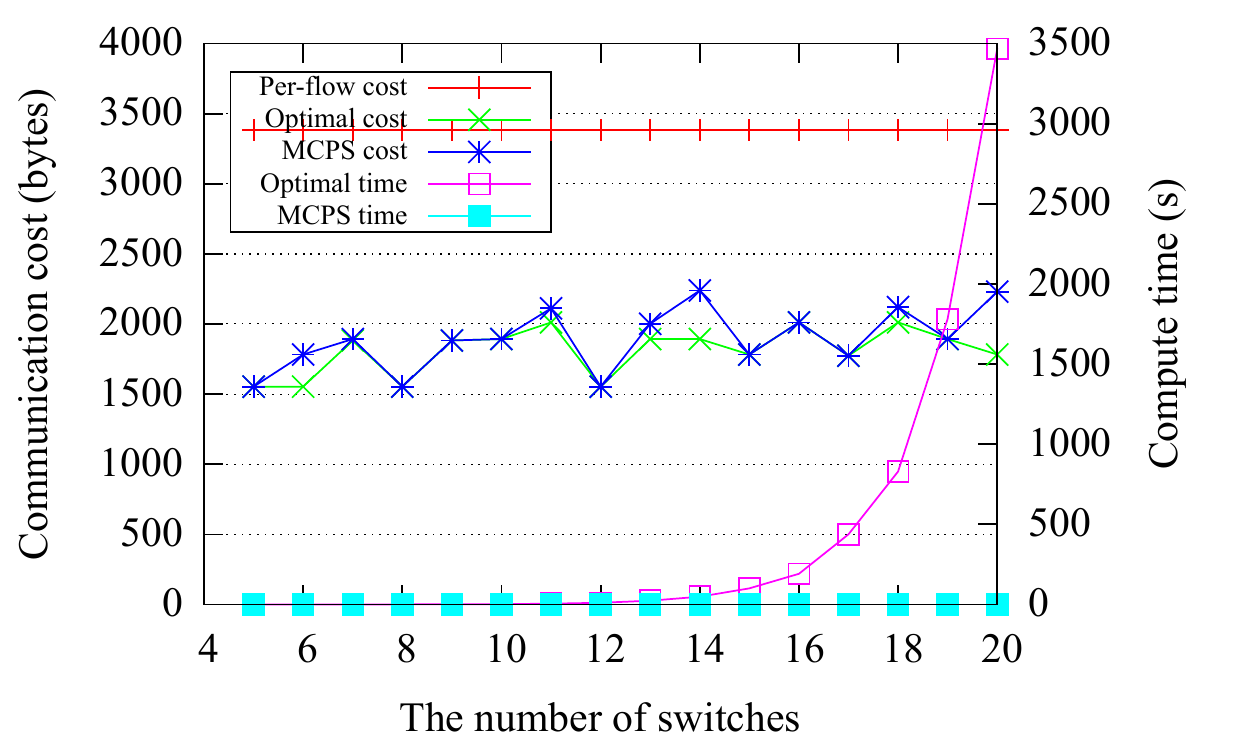}
  \caption{Comparison of the optimal solution and the heuristic as the number of switches varies.}
  \label{fig_optvsswitch}
\end{figure}

The relation between the number of switches and the polling scheme computing time is explored as well. As shown in Figure~\ref{fig_caltimevsswitch}, for $20000$ active flows in a Erd\H{o}s-R\'enyi graph, the computing time for the polling scheme keeps relatively stable, since the computing time is in logarithm relation with the number of switches.

\subsubsection{Accuracy}
We evaluate the accuracy of MCPS by two metrics: accurate flow ratio (AFR) which is obtained by the number of accurate measured flows over the total number of flows; average accuracy of traffic matrix (TM) estimation which is obtained by accurate measured matrix elements over the total number of the elements in the traffic matrix. Loss switch ratio is defined as the number of loss switches to the number of all switches. We generate loss switches in a uniformly random manner according to the loss switch ratio.
\begin{figure}[!t]
  \centering
  \includegraphics[width=\linewidth]{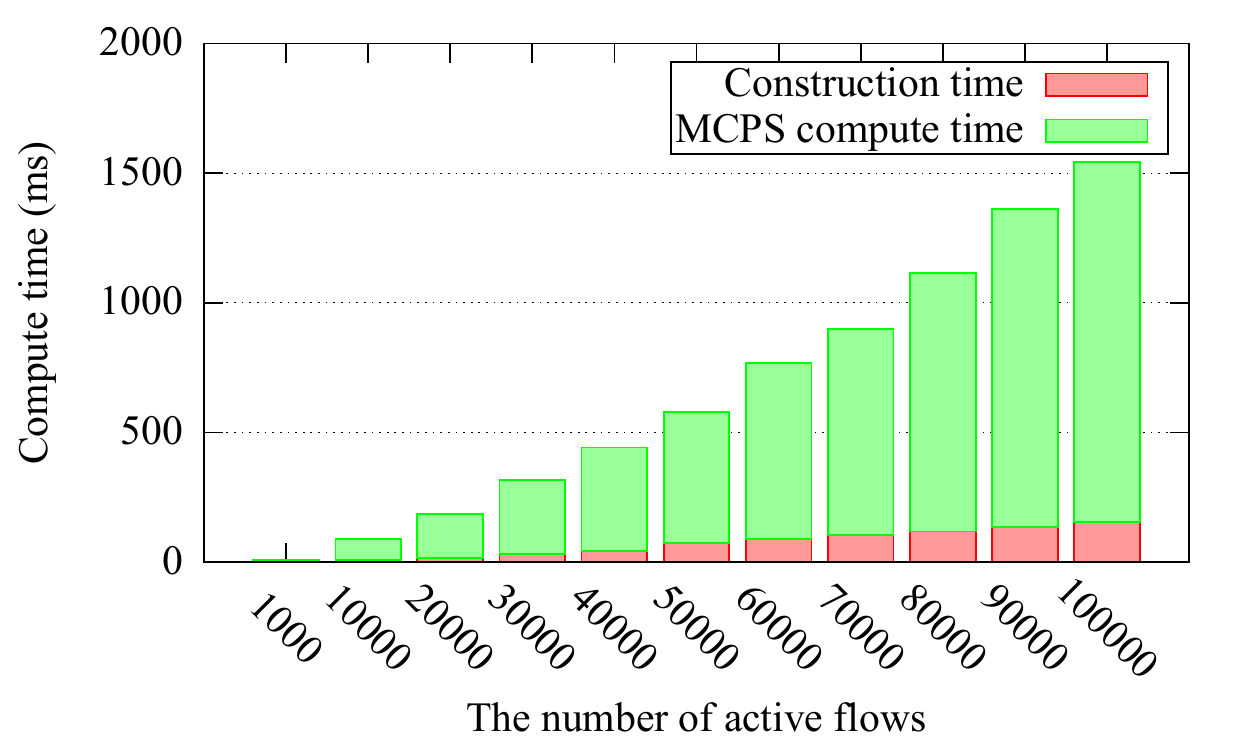}
  \caption{The weighted set cover construction time and the solution computing time.}
  \label{fig_constructsetcover}
\end{figure}
  
\begin{figure}[!t]
  \centering
  \includegraphics[width=\linewidth]{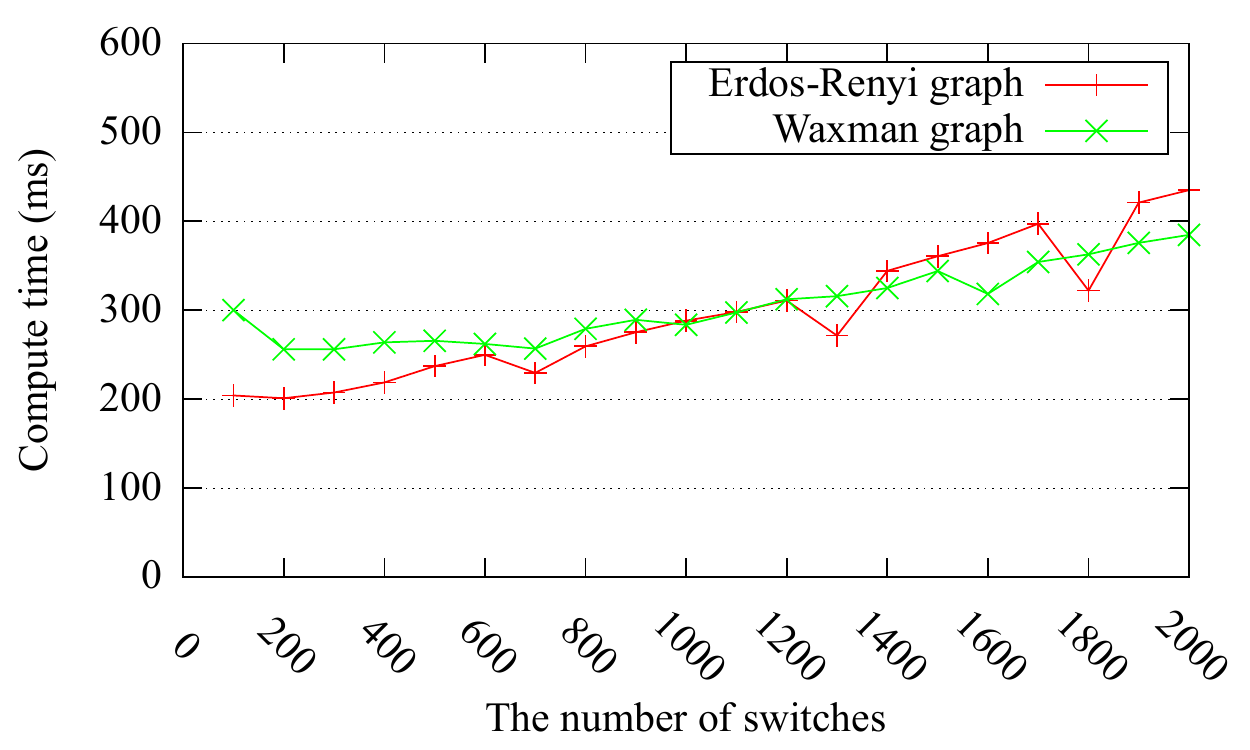}
  \caption{The total computing time vs. the number of switches ($20000$ active flows).}
  \label{fig_caltimevsswitch}
\end{figure}

Figure~\ref{fig_accuracyvspacketloss} illustrates that the AFR is robust to the increasing packet loss rate; the accuracy of TM estimation falls gradually from $99.9\%$ to $98.1\%$. Figure~\ref{fig_accuracyvslossswitch} shows that the AFR falls in proportion to the loss switch ratio. However, the accuracy of TM estimation only decreases slightly from $99.9\%$ to $99.7\%$. These experiments demonstrate that MCPS reduces the communication cost with negligible loss of accuracy.

\subsubsection{Handling Flow Dynamics}
The performance of DAPR is presented in Figure~\ref{fig_flowdynamics}. The number of active flows in the $60$s traces varies from $243$ to $1746$. The communication cost of the per-flow polling method is plotted for comparison and the cost is in proportion to the number of active flows. ``Recompute'' method is given as optimal since it is the cost by recomputing the polling scheme every second. Clearly, DAPR does not increase too much communication cost compared with the recompute method. This is because current polling scheme consists many polling all switches, which means most of the new flows have been covered by the current polling scheme. Although there exist plenty of short flows, MCPS can still keep relatively stable. Sometimes, the performance of the heuristic is even better than the recompute method. The reason is that the polling scheme is calculated by an approximation algorithm. Increasing a limited number of single polling has little impact on the total communication cost. Therefore, the scheme obtained by DAPR might be better than the recompute method in a short period of time. Moreover, DAPR with ARI ($\tau=0.3$) is better than DAPR with $10$s recompute interval as it only recomputes $4$ times which is $33.3\%$ less than the fixed interval recomputation. As mentioned, the polling scheme generated by MCPS is stable even the active flows varies a lot. This property motivates us to employ ARI to adaptively increase the recompute interval in practice. In summary, DAPR is able to tackle dramatic flow dynamics in the network. Combined with ARI, it can further reduce the computing overhead of CeMon.
\begin{figure}[!t]
  \centering
  \includegraphics[width=\linewidth]{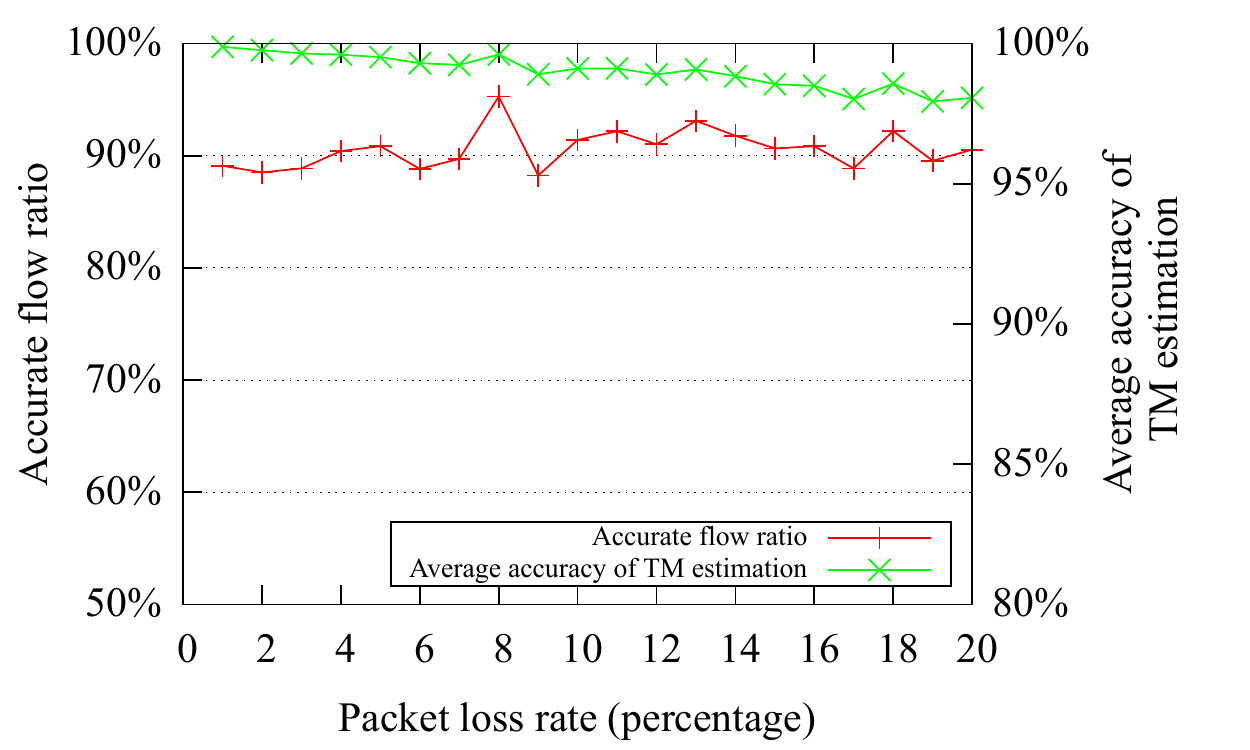}
  \caption{Accurate flow ratio and average accuracy of traffic matrix estimation as the packet loss rate varies from $0$ to $20\%$ (with a loss switch ratio of $10\%$).}
  \label{fig_accuracyvspacketloss}
\end{figure}

\begin{figure}[!t]
  \centering
  \includegraphics[width=\linewidth]{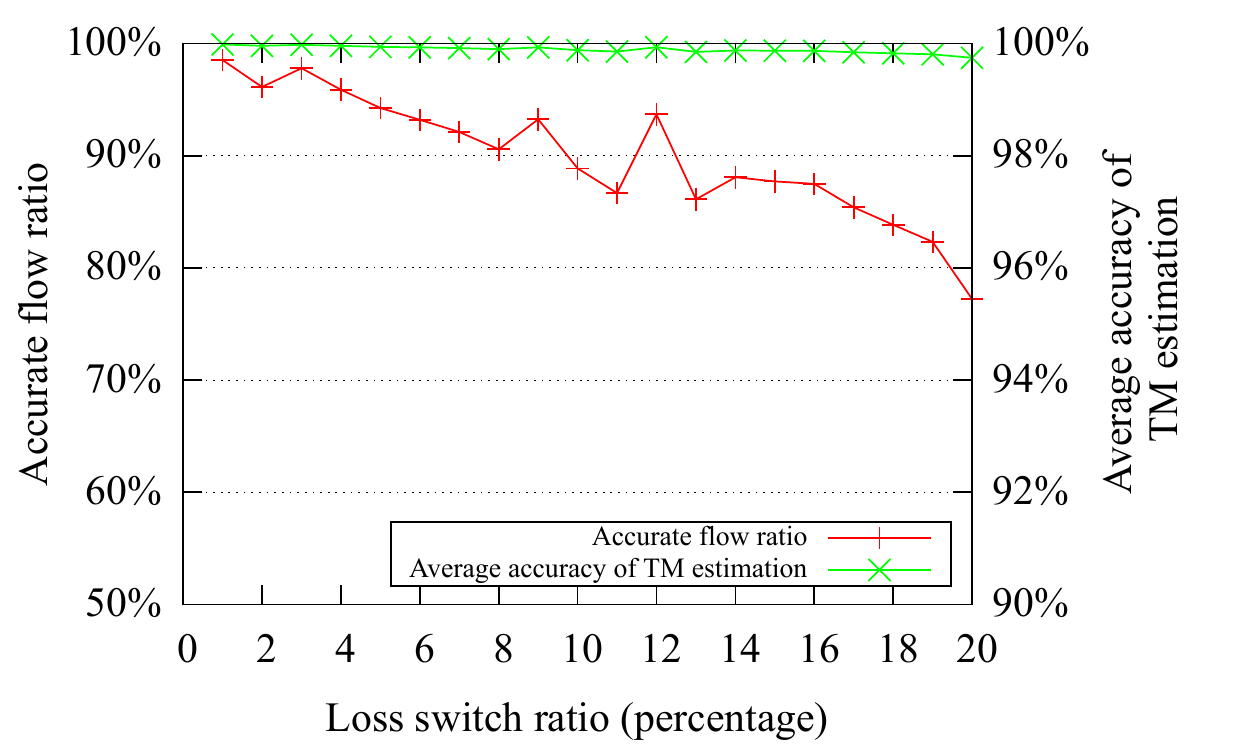}
  \caption{Accurate flow ratio and average accuracy of traffic matrix estimation as the loss switch ratio varies from $0$ to $20\%$ (with a packet loss rate of $1\%$).}
  \label{fig_accuracyvslossswitch}
\end{figure}

\subsection{AFPS Results}
To evaluate the polling overhead and effectiveness of AFPS, we analyze the communication cost and the accuracy by implementing a link utilization task according to Eq.~\ref{eq_lu} on top of AFPS. We measure the link utilization for a class C subnet on a link using real packet traces~\cite{trafficchar}. We utilize both TCP and UDP traffic which have different traffic characteristics to verify the performance of AFPS.
\begin{figure}
  \centering
  \includegraphics[width=\linewidth]{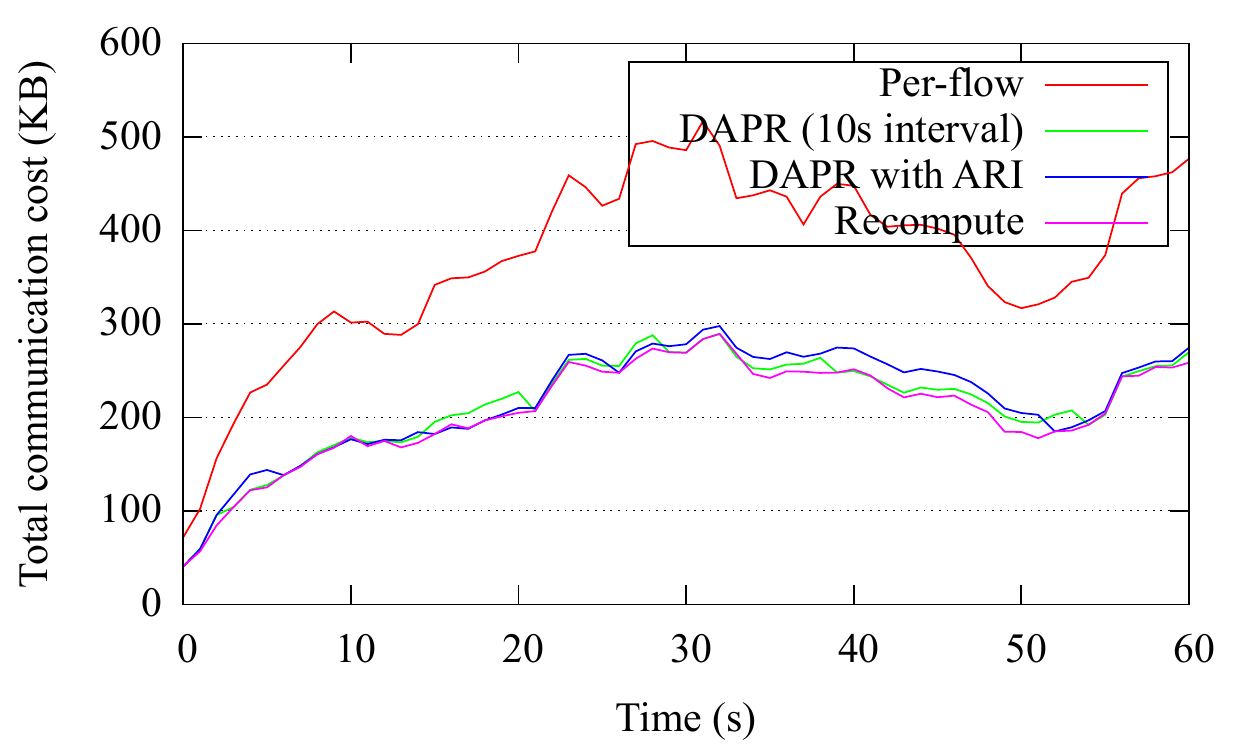}
  \caption{The performance of DAPR.}
  \label{fig_flowdynamics}
\end{figure}

\subsubsection{Accuracy and Communication Cost}
The initial sampling interval for all algorithms are set to $1$s. The link utilization measurement interval is set to $1$s as well. The soft timeout for each flow is set to $10$s which is common used in practice. The minimum and maximum polling intervals are set to $0.5$s and $5$s respectively which is the same as in PayLess~\cite{payless}.

Figure~\ref{fig_afpsutilization} shows the link utilization of different tuning algorithms for TCP traffic. The number of active flows in this $60$s trace during this period is $2668$. The actual link utilization and the periodic polling are plotted for comparison. The actual utilization is highlighted. The link utilization obtained by our tuning algorithms follows the actual utilization closely. However, adaptive sampling methods may miss some traffic spikes. For instance, from $10$s to $15$s, the small traffic peak is not detected by all the sampling methods. Figure~\ref{fig_afpscost} is the corresponding communication cost for different sampling methods. Clearly, PT, EWMAT and SWT generate less sampling messages than the periodic polling. As it is not apparent to tell which method is more accurate, quantitative analysis and comparison for each sampling method will be given in the next section.

Compared with TCP, UDP has no flow control mechanism and usually lasts longer. The number of active flows in this trace is $111$, which is much less than the TCP traffic. Figure~\ref{fig_afpsutilizationudp} and Figure~\ref{fig_afpscostudp} depict the measured link utilization and the corresponding message number for UDP traffic respectively. Even the UDP traffic fluctuates sharply, PT, EWMAT and SWT follow the traffic pattern well. Since UDP has many long flows, AFPS is able to reduce the polling messages significantly comparing with the periodic polling. This is because these tuning algorithms can better follow the flow pattern as they use historical data to predict future traffics. The number of polling requests for both TCP and UDP traffic fluctuates over time since the AFPS is adaptive to the drastic flow dynamics. Comparatively, the volatility of CeMon's tuning algorithms is slightly smaller than periodic polling and PayLess. The algorithms converge and produce better results when these flows last long.
\begin{figure}[!t]
  \centering
  \includegraphics[width=\linewidth]{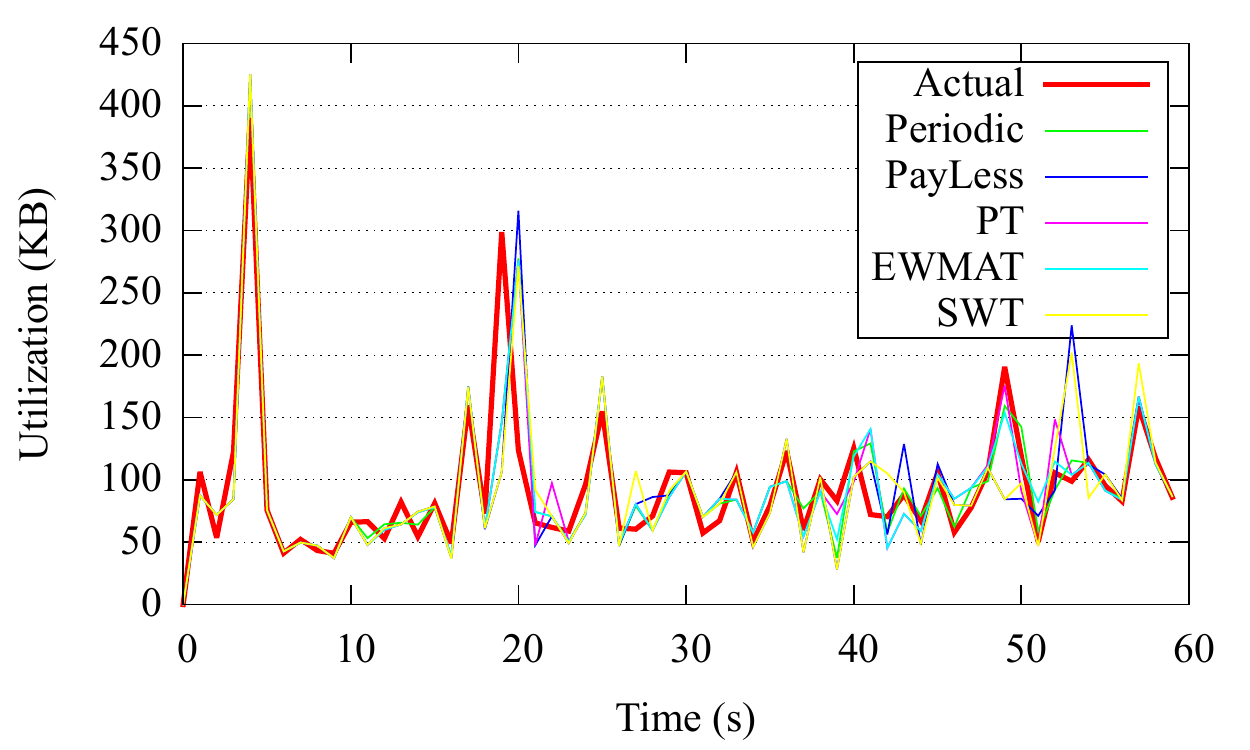}
  \caption{The measured link utilization by AFPS for TCP traffic.}
  \label{fig_afpsutilization}
\end{figure}

\begin{figure}[!t]
  \centering
  \includegraphics[width=\linewidth]{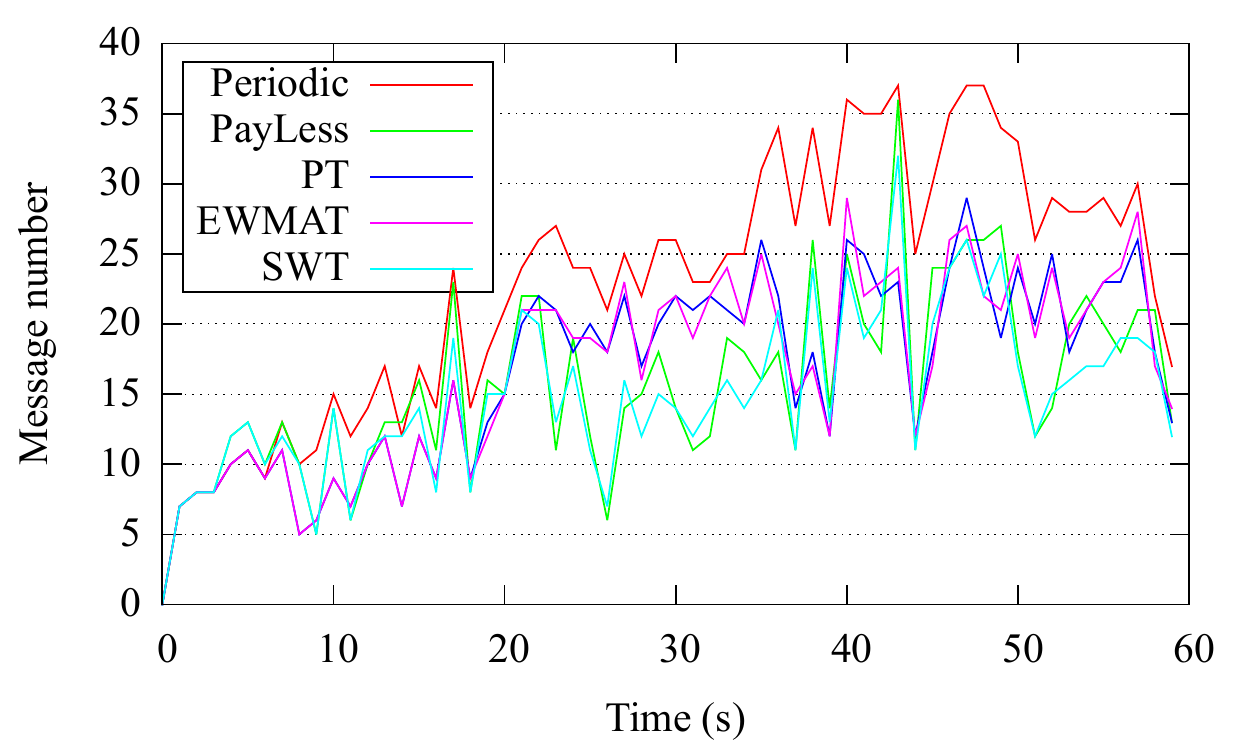}
  \caption{The number of sampling messages for TCP traffic.}
  \label{fig_afpscost}
\end{figure}

\subsubsection{Tuning Algorithms Comparison}
We compare the proposed tuning algorithms with the periodic polling and a prior work PayLess~\cite{payless}. The measurement error is given by Eq.~\ref{eq_error}. Figure~\ref{fig_afpscomp} presents the measurement error and the cost for the tuning algorithms. Obviously, the proposed tuning algorithms for AFPS significantly outperform the periodic polling in terms of the polling cost for both TCP and UDP traffic. In particular, for UDP traffic, PT and EWMAT save up to $67\%$ of the communication cost, which further reduce the cost by $13\%$ compared with PayLess. For TCP traffic which contains plenty of short flows, PT, EWMAT and SWT save $20\%$, $20\%$ and $26\%$ of the polling cost respectively compared with the periodic polling. For the measurement error, both PayLess and AFPS have a slightly larger error than the periodic polling. This is because the periodic polling requests the statistics more aggressively and incurs more polling overhead. However, it is worth noting that SWT and PayLess have nearly the same measurement error while SWT generates less polling messages. Specifically, PT reduces the cost by $20\%$ with only $1.5\%$ accuracy loss. Besides, consider SWT is almost parameter-free and easy to configure, it is superior to other sampling methods in the light of the cost and the accuracy. These results demonstrate that AFPS strikes a good trade-off between the measurement accuracy and the cost. By trading a little accuracy, AFPS notably reduces the polling overhead, which is crucial for fine-grained measurements.
\begin{figure}[!t]
  \centering
  \includegraphics[width=\linewidth]{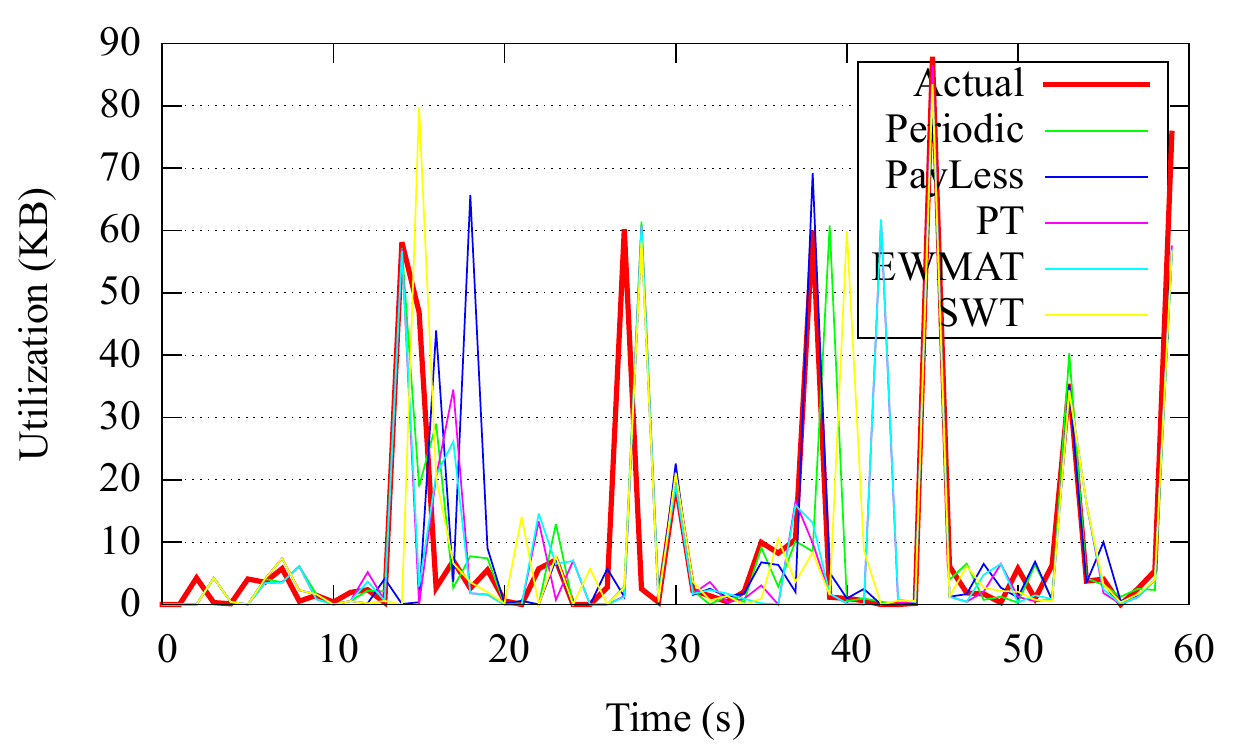}
  \caption{The measured link utilization by AFPS for UDP traffic.}
  \label{fig_afpsutilizationudp}
\end{figure}

\begin{figure}[!t]
  \centering
  \includegraphics[width=\linewidth]{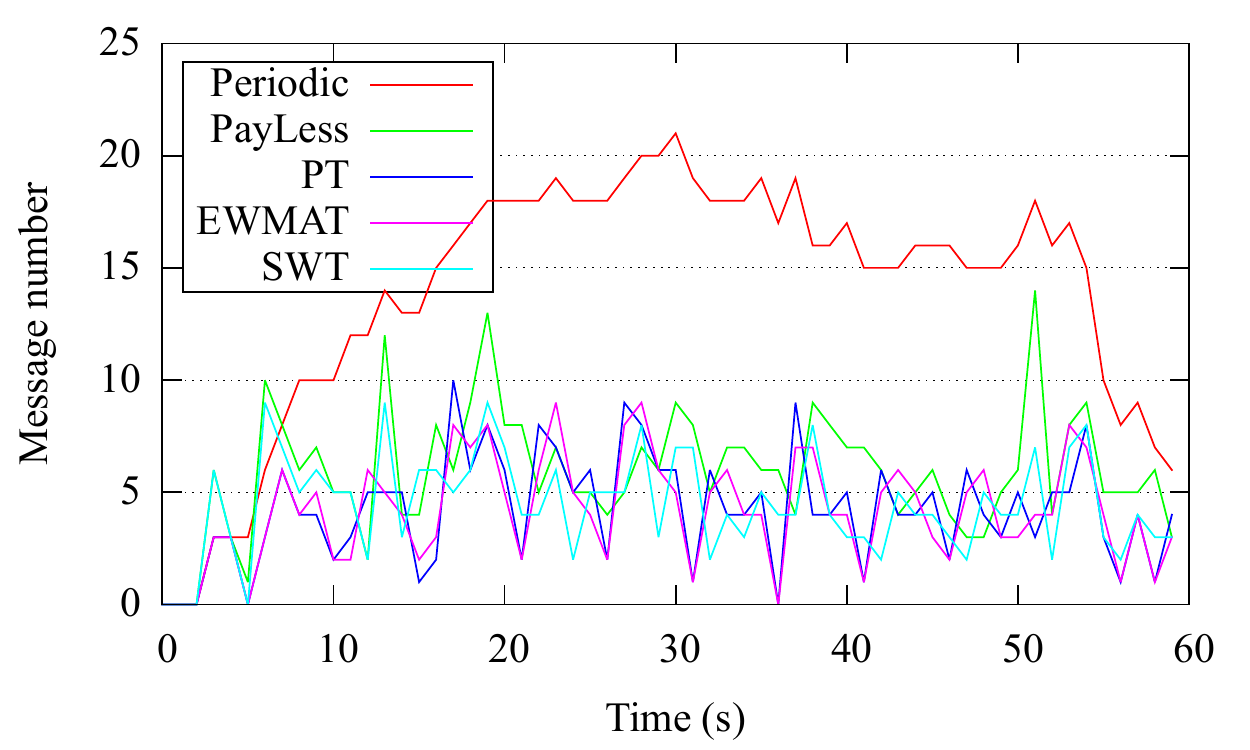}
  \caption{The number of sampling messages for UDP traffic.}
  \label{fig_afpscostudp}
\end{figure}

\section{Related Work} \label{sec_relatedwork}
Prior work explored different approaches to design a low-cost high-accuracy measurement system for SDN-based networks. Our earlier work FlowCover~\cite{flowcover} presented preliminary results of reducing monitoring overhead in out-of-band deployment of SDN. Dynamically changing the aggregation granularity is a common approach to reduce the measurement cost in SDN. L. Jose et al. detected hierarchical heavy hitters by changing the measurement rules in the switches~\cite{onlineaggregate}. OpenWatch~\cite{adaptiveflowcounting} adjusted the measurement granularity in both the spatial and the temporal dimensions. To further reduce the monitoring overhead, FlowSense~\cite{flowsense} presented a push-based method to measure the network link utilization with zero overhead. However, FlowSense can only obtain the link utilization at discrete points in time and cannot meet the real-time monitoring requirement. Besides, Amazon CloudWatch~\cite{cloudwatch} provided APIs to monitor online service status. Planck~\cite{planck} employed oversubscribed port mirroring to gather network states with milliseconds-scale. These work attempted to trade off the accuracy and overhead in the measurement applications by aggregation or sampling. CeMon is orthogonal to these work since it works as a bottom layer to reduce the fetching counter bandwidth consumption.
\begin{figure}[!t]
  \centering
  \includegraphics[width=\linewidth]{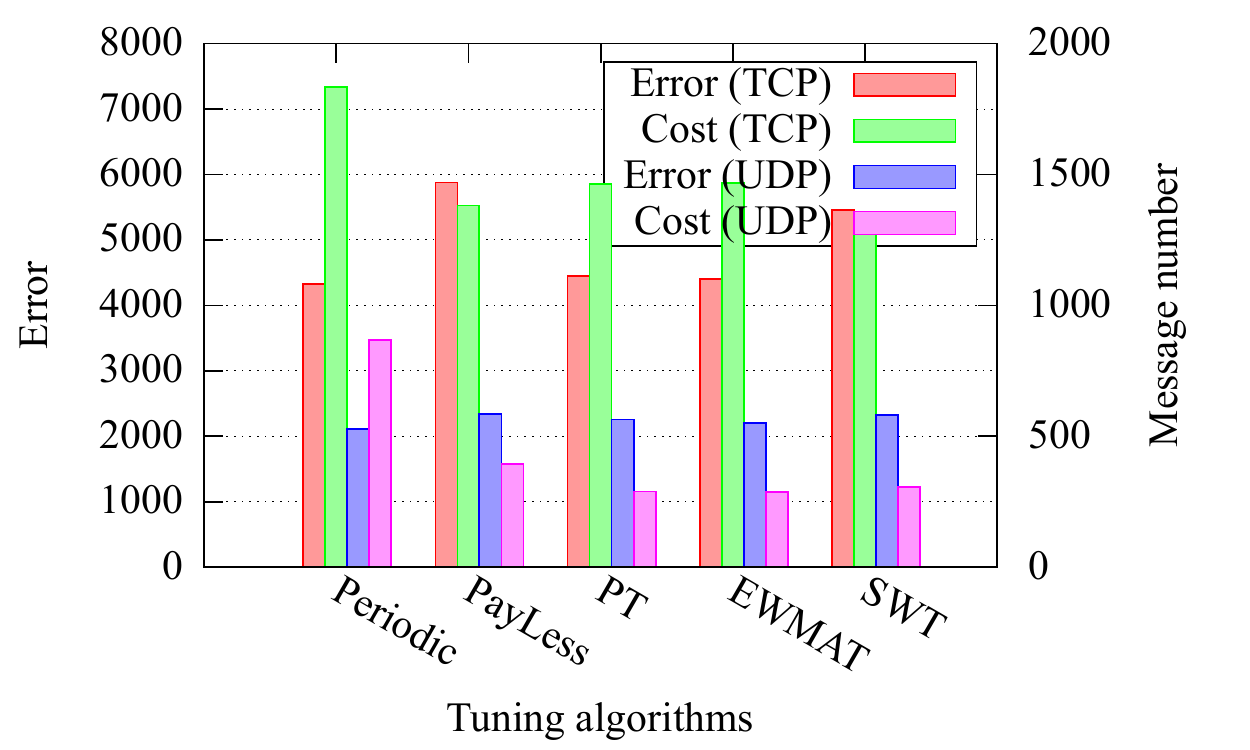}
  \caption{Comparison of different tuning algorithms.}
  \label{fig_afpscomp}
\end{figure}

Sampling is another alternative to alleviate the monitoring overhead. DevoFlow~\cite{devoflow} proposed a sampling-based method to improve the performance of statistics collection. CSAMP~\cite{csamp} maximized the monitoring flow coverage by consistent sampling. Moreover, a sampling extension for monitoring applications is presented in~\cite{flexam}. The most related work to our proposal is OpenTM~\cite{opentm}, which is a traffic matrix estimation system that gathers flow statistics by different querying strategies. However, it collected active flow statistics on a per-flow basis without considering the bandwidth consumption. In contrast, our approach globally optimizes the polling strategy to fetch counters from switches.

Previous work on programmable measurement frameworks has demonstrated the benefits of customized measurement applications. ProgME~\cite{progme} enabled flexible flow counting by defining the concept of flowset that is an arbitrary set of flows for different applications. Another measurement primitive OpenSketch~\cite{opensketch} allowed customized TCAM-based measurement in SDN. DCM~\cite{dcm} provided a two-stage bloom filter switch architecture to facilitate the SDN monitoring. These existing proposals mainly focused on the application layer measurement, whereas our system is a shim layer between the controller and the physical switches. Our statistic collection schemes can be applied to all these approaches to reduce the measurement overhead.

Benefiting from the global visibility of SDN and low memory usage of streaming algorithms, more flexible SDN measurement techniques~\cite{opensketch,onlineaggregate, resourceaccuracy} are proposed to accommodate different sketch-based measurement algorithms. Specifically, OpenSketch presented a SDN switch architecture with a three-stage pipeline, which supported various TCAM-based measurement tasks with low memory usage and high flexibility. These approaches required extra wildcard rules which are stored in TCAMs to implement measurement applications. However, TCAMs are precious resources and only have thousands of entries in today's commodity switches~\cite{devoflow}. The trade-off between the resource consumption and measurement accuracy had first been studied in~\cite{resourceaccuracy}. It also introduced the resource allocation problem in software defined measurement. A follow up work DREAM~\cite{dream} extended the measurement scenario from a single switch to multiple switches. DREAM combined the local and global accuracy estimation to achieve a desired level of accuracy using a practical allocation algorithm.

Reducing the communication cost of distributed systems has attracted the attention of many researchers. Li et al.~\cite{distributedmonitoring} adopted integer linear programming to optimize the distributed monitoring infrastructure in traditional networks. Three heuristics were proposed to reduce the deployment cost of polling nodes. Additionally, many theoretical studies~\cite{communicationcost,datasummary, tongcd12, freqestimation} investigated minimizing the communication cost of ``thresholded counts'' in distributed monitoring systems by setting local thresholds.

\section{Discussion} \label{sec_discussion}
\textbf{Selection of polling schemes.} The MCPS and the AFPS target at different monitoring scenarios. If network operators setup network-wide monitoring application, the MCPS is suitable for this scenario since it collects all the flow statistics in a cost-effective manner. The aggregated polling requests and replies greatly reduce the monitoring overhead. However, if network operators only want to know the flow statistics for a host, the AFPS should be used because the sampling technique provides a light-weight, fine-grained and high-accuracy monitoring result.

\textbf{Proactive rules.} If forwarding rules are installed proactively, CeMon may not track the matching flow status because flow arrive messages will not be sent to the controller. However, if these rules are in the selected polling all switches, CeMon is still able to obtain the flow statistics from them. One possible solution is that network operators explicitly specify these proactive rules in the controller. Another alternative is that CeMon periodically polls all flows in every switch to track all the active flows in the network.

\textbf{Wildcard rules.} Fine-grained analysis on each wildcard rule is able to further reduce the polling overhead. For instance, if an application monitors the link utilization of a subnet and a corresponding forwarding wildcard rule exists, then a single polling of this rule is the optimal solution. We leave such extension to our future work.

\section{Conclusion} \label{sec_conclusion}
In this paper, we propose CeMon, a low-cost high-accuracy SDN monitoring system. We analyze the communication overhead of SDN monitoring and propose two novel generic polling schemes to accommodate various monitoring applications. Specifically, MCPS globally optimizes the polling overhead to gather all flow statistics. Heuristics are presented to generate the polling scheme efficiently and handle flow dynamics. AFPS are proposed as a complementary method to collect statistics from a subset of active flows. Despite the uniform flow level measurements formulation, three adaptive algorithms are presented to dynamically adjust polling intervals to further reduce the monitoring overhead. Both emulation and simulation results show that MCPS reduces more than $50\%$ of the communication cost. In addition, we use real packet traces to demonstrate that AFPS significantly reduces the monitoring overhead with negligible loss in accuracy.

\section*{Acknowledgments}
This research is supported by HKUST Research Grants Council (RGC) 613113.

\section*{References}
\bibliographystyle{elsarticle-num}
\bibliography{references}

\end{document}